\newcommand{\abs}[1]{\left| #1 \right|}
\newcommand{\norm}[1]{\left|\mkern-1mu\left| #1 \right|\mkern-1mu\right|}
\newcommand{\okra}[1]{\left( #1 \right)}
\newcommand{\kwad}[1]{\left[ #1 \right]}
\newcommand{\klam}[1]{\left\{ #1 \right\}}
\newcommand{\ceil}[1]{\left\lceil #1 \right\rceil}
\DeclareMathOperator{\sred}{\mathbf{E}}
\newcommand{\boole}[1]{{\bf 1}_{\klam{#1}}}
\newtheorem{proposition}{Proposition}
\newtheorem{lemma}{Lemma}
\newenvironment*{proof}{\begin{trivlist}\item[]
\noindent\textbf{Proof:}}{$\Box$\par\end{trivlist}}
\newenvironment*{proof*}[1]{\begin{trivlist}\item[]
\noindent\textbf{Proof of #1:}}{$\Box$\par\end{trivlist}}
\renewcommand{\thesection}{\Roman{section}}
\title{Mixing, Ergodic, and Nonergodic Processes with \\ Rapidly
  Growing Information between Blocks} \date{}
\author{{\L}ukasz D\k{e}bowski%
  \thanks{
    The research reported in Section \ref{secExcess} of this work was
    supported by the IST Programme of the European Community, under
    the PASCAL II Network of Excellence, IST-2002-506778. %
    \newline\null\hspace{\parindent}
    {\L}. D\k{e}bowski is with
    the Institute of Computer Science, Polish Academy of Sciences, ul.
    Ordona 21, 01-237 Warszawa, Poland (e-mail: ldebowsk@ipipan.waw.pl).}
}
\begin{document}

\pagestyle{empty}   
\begin{titlepage}
\maketitle

\begin{abstract}
  We construct mixing processes over an infinite alphabet and ergodic
  processes over a~finite alphabet for which Shannon mutual
  information between adjacent blocks of length $n$ grows as
  $n^\beta$, where $\beta\in(0,1)$. The processes are a~modification
  of nonergodic Santa Fe processes, which were introduced in the
  context of natural language modeling. The rates of mutual
  information for the latter processes are alike and also established
  in this paper. As an auxiliary result, it is shown that infinite
  direct products of mixing 
  processes are also mixing. 
  \\[1em]
  \textbf{Key words}: direct products, ergodic processes, mixing,
  mutual information, variable length coding
  \\[1em]
  \textbf{MSC 2010:} 37A25, 94A17
  \\[1em]
  \textbf{Running head}: Processes with Rapidly Growing 
  Information
\end{abstract}

\end{titlepage}
\pagestyle{plain}   


\section{Introduction}
\label{secIntro}

Let $H(X):=\sred\kwad{-\log P(X)}$ denote the entropy of
a~discrete variable $X$ on a~probability space
$(\Omega,\mathcal{J},P)$, where $\sred$ is the expectation with
respect to $P$, $\log$ is the binary logarithm, and the variable
$P(X)$ takes the value $P(X=x)$ for $X=x$. We have the mutual
information $I(X;Y):=H(X)+H(Y)-H(X,Y)$ for finite entropies on
the right hand side. Besides, we have the conditional entropy
$H(X|Z)=H(X,Z)-H(Z)$ and the conditional mutual information
$I(X;Y|Z):=H(X|Z)+H(Y|Z)-H(X,Y|Z)$. These definitions are
generalized to arbitrary random variables, e.g., in
\cite{Pinsker60en,Debowski09}.

Let $(X_i)_{i\in\mathbb{Z}}$ be a~stationary process on
$(\Omega,\mathcal{J},P)$, where $X_i:(\Omega,\mathcal{J})\rightarrow
(\mathbb{X},\mathcal{X})$. For its distribution
$\mu=P((X_i)_{i\in\mathbb{Z}}\in\cdot)$ we denote the mutual
information between blocks of length $n$ as
\begin{align}
  \label{En}
  E_\mu(n):=I\okra{X_{1:n};X_{n+1:2n}} 
  .
\end{align}
The limiting value of mutual information, called excess entropy, is
defined as
\begin{align}
  E_\mu:=I((X_i)_{i\le 0};(X_i)_{i\ge 1})=
  \lim_{n\rightarrow\infty} E_\mu(n)  
\end{align}
These quantities are natural measures of dependence for
discrete-valued processes \cite{CrutchfieldFeldman03}.
We are interested in constructing diverse examples of stationary
measures for which
\begin{align}
  \label{limEn}
  E_\mu(n)\asymp n^\beta
  ,
\end{align}
where $\beta\in(0,1)$, because certain measures of this kind may be
useful for modeling natural language, cf.,
\cite{Hilberg90,Debowski11b}.

Mentioning related results, let us first consider Gaussian
processes. For theses processes the conditional mutual information
equals $I(X_0;X_n|(X_i)_{i=1}^{n-1})=-\log(1-|\alpha(n)|^2)$, where
function $\alpha(k)$ is the partial autocorrelation, cf.,
\cite{BrockwellDavis87}. Regardless of the alphabet, the mutual
information between blocks may be reconstructed from conditional
mutual information as
\begin{align}
  E_\mu(n)=
  \sum_{k=1}^{n-1} k I(X_0;X_k|(X_i)_{i=1}^{k-1})
  +
  \sum_{k=n}^{2n-1} (2n-k) I(X_0;X_k|(X_i)_{i=1}^{k-1})
  .
\end{align}
Thus the asymptotics (\ref{limEn}) holds if and only if
$\sum_{k=1}^nk\abs{\alpha(k)}^2\asymp n^\beta$.  As a~result, the
construction of processes that satisfy condition (\ref{limEn}) is easy
because the sole constraint on partial correlation reads
$\abs{\alpha(k)}\le 1$ \cite{Ramsey74}. However, a~classical result
\cite{Finch60} says that excess entropy of nonsingular Gaussian
autoregressive moving average (ARMA) processes is finite, cf.,
\cite{CrutchfieldFeldman03}, \cite[Theorem 9.4.1]{CoverThomas91},
\cite[Section 5.5]{GrenanderSzego84}.

Some examples of stationary processes for which excess entropy is
infinite are also known for discrete-valued processes. The trivial
example for a~countably infinite alphabet is a~process such that
$X_i$ does not depend on $i$ and $H(X_i)=\infty$. Then we have
$E_\mu(n)=\infty$ for any $n\ge 1$.
The aforementioned construction is impossible for processes over
a~finite alphabet.  Considering those processes, we mention first that
asymptotics $E(n)= (k/2)\log (n/2\pi e)+O(1)$ holds for any Bayesian
mixture of a~$k$-parameter model with a~prior concentrated on a~subset
of parameters with bounded Fisher information \cite[Theorem
8.3]{Grunwald07}.
Similar asymptotics $E(n)\asymp \log n$ holds for a~binary process
constructed by Gramss \cite{Gramss94}, cf.,
\cite{CrutchfieldYoung89,Ebeling97}. The distribution of that process
is formed by the frequencies of 0's and 1's in the rabbit sequence.
As for processes with infinite excess entropy that are mixing, Bradley
\cite{Bradley80} constructed a~binary process which satisfies two
conditions, cf., \cite{Sujan83}: (i) the process is $\rho$-mixing and
(ii) the restricted measure $P((X_i)_{i\le 0\lor i\ge n}\in \cdot)$
is singular with respect to the product measure $P((X_i)_{i\le 0}\in
\cdot) \times P((X_i)_{i\ge n}\in \cdot)$ for any $n\ge 1$
\cite[Lemma 3]{Bradley80}. The first property implies that the process
is mixing in the ordinary ergodic theoretic sense \cite[Volume 1,
Chapters 3 and 5]{Bradley07}. The second property implies that the
excess entropy is infinite.

A~few other examples concern hidden Markov chains. By the data
processing inequality, excess entropy is finite for hidden Markov
chains with a~finite number of hidden states
\cite{ShaliziCrutchfield01}.  On the other hand, if the distribution
of ergodic components of a~stationary process has infinite entropy
then the process has infinite excess entropy \cite[Theorem
5]{Debowski09}. Such a~situation may arise for hidden Markov chains
with a~countably infinite number of hidden states.  (Consider for
instance a~mixture of periodic processes where the probability of
a~period is a~sufficiently slowly decreasing function of the cycle
length \cite{TraversCrutchfield11}.)
A~less trivial example, constructed in \cite{TraversCrutchfield11}, is
a~stationary \emph{ergodic} hidden Markov chain with infinite excess
entropy, a~finite number of output symbols, and a~countably infinite
alphabet of hidden states.

In this paper we will consider another class of processes that are
nonergodic, ergodic, or mixing and satisfy condition
(\ref{limEn}). The construction of these processes is motivated
linguistically. Let us first sketch this motivation.  In our previous
work \cite{Debowski11b}, we have shown that proportionality
(\ref{limEn}) implies a~power law which resembles Zipf's law for the
distribution of words.  Namely, product $E_\mu(n)\log n$ is upper
bounded by the expected vocabulary size of an admissibly minimal
grammar for the text of length $n$. It was empirically observed that
the latter quantity approximates the number of distinct words for
texts in natural language \cite{DeMarcken96}. Our bound for mutual
information and the vocabulary size holds if the alphabet $\mathbb{X}$
is finite and the process's distribution has finite energy property
\cite[Theorem 3]{Debowski11b}. There is also another linguistically
motivated bound for $E_\mu(n)$. That one is a~lower bound. Namely,
asymptotics
\begin{align}
  \label{limsupEn}
  \limsup_{n\rightarrow\infty} E_\mu(n)/n^\beta>0
\end{align}
follows from a~hypothesis that texts describe an infinite random
object in a~highly repetitive way so that $n^\beta$ independent facts
about the object can be inferred on average from the text of length
$n$ \cite[Theorem 2]{Debowski11b}.

The goal of this paper is to prove the stronger asymptotics
(\ref{limEn}) for processes that were discussed in \cite{Debowski11b}
and to define a~new model of texts that describe a~random object.  So
far, we have considered objects that do not change in time.  This
leads to models of texts being nonergodic measures. Here, we will
admit objects that evolve slowly. That leads to models of texts which
are mixing measures and still satisfy proportionality
(\ref{limEn}). In this way, linguistic inspiration contributes to
better understanding of yet another problem in information theory.

Let us introduce our basic example.  Throughout this paper,
$(X_i)_{i\in\mathbb{Z}}$ denotes a~stationary process on
$(\Omega,\mathcal{J},P)$ with $X_i:(\Omega,\mathcal{J})\rightarrow
(\mathbb{X},\mathcal{X})$ and $\mathbb{X}=\mathbb{N}\times\klam{0,1}$,
where $\mathbb{N}$ is the set of positive integers.  In the series of
papers \cite{Debowski09,Debowski10,Debowski11b} we have examined some
properties of the following process $(X_i)_{i\in\mathbb{Z}}$, called
the (original) Santa Fe process in \cite{Debowski11b}. Namely, the
variables $X_i$ consist of pairs
\begin{align}
  \label{exUDP}
  X_i&=(K_i,Z_{K_i})
  ,
\end{align}
where processes $(K_i)_{i\in\mathbb{Z}}$ and $(Z_k)_{k\in\mathbb{N}}$
are independent and distributed as follows. First, variables $Z_k$
are binary and equidistributed,
\begin{align}
  P(Z_k=0)=P(Z_k=1)&=1/2
  ,
  &
  (Z_k)_{k\in\mathbb{N}}&\sim \text{IID}
  .
\end{align}
Second, variables $K_i$ obey the power law
\begin{align}
  \label{ZetaK}
  P(K_i=k)&=k^{-1/\beta}/\zeta(\beta^{-1})
  , 
  &
  (K_i)_{i\in\mathbb{Z}}&\sim \text{IID}
  ,
\end{align}
where $\beta\in(0,1)$ and $\zeta(x)=\sum_{k=1}^\infty k^{-x}$ is the
zeta function. 

Let us recall that $\mu=P((X_i)_{i\in\mathbb{Z}}\in\cdot)$ and
$E_\mu(n)=I\okra{X_{1:n};X_{n+1:2n}}$. The first new result of this
paper is:
\begin{proposition}
  \label{theoEnUDP}
  The block mutual information $E_\mu(n)$ for the original Santa Fe
  process $(X_i)_{i\in\mathbb{Z}}$ given by formula (\ref{exUDP})
  obeys
  \begin{align}
    \label{limEnUDP}
    \lim_{n\rightarrow\infty} \frac{E_\mu(n)}{n^\beta}=
    \frac{(2-2^\beta)\Gamma(1-\beta)}{[\zeta(\beta^{-1})]^\beta}
    .
  \end{align}
\end{proposition}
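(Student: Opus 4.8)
The plan is to show first that the mutual information between the two blocks is carried entirely by the binary labels $Z_k$ that happen to be visible in \emph{both} blocks, and then to evaluate the expected number of such labels by an occupancy-type asymptotic analysis. Let $U_n:=\klam{K_1,\dots,K_n}$ and $V_n:=\klam{K_{n+1},\dots,K_{2n}}$ be the (random) sets of values occurring in the two blocks, and write $p_k:=P(K_i=k)=k^{-1/\beta}/\zeta(\beta^{-1})$. Because $\sum_k k^{-1/\beta}\log k<\infty$ for $\beta\in(0,1)$, we have $H(K_1)<\infty$, so every entropy below is finite. Since $(Z_k)_{k\in\mathbb{N}}$ is independent of $(K_i)_{i\in\mathbb{Z}}$, the block $X_{1:n}$ generates the same $\sigma$-field as the pair $\okra{K_{1:n},(Z_k)_{k\in U_n}}$, and likewise for $X_{n+1:2n}$ and $X_{1:2n}$; conditioning on the $K$'s and using that the $Z_k$ are i.i.d.\ fair bits gives $H(X_{1:n})=H(K_{1:n})+\mathbf{E}\abs{U_n}$, $H(X_{n+1:2n})=H(K_{n+1:2n})+\mathbf{E}\abs{V_n}$ and $H(X_{1:2n})=H(K_{1:2n})+\mathbf{E}\abs{U_n\cup V_n}$. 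As $(K_i)$ is i.i.d., $I(K_{1:n};K_{n+1:2n})=0$, so subtracting the three identities yields
\begin{align*}
  E_\mu(n) &= \mathbf{E}\abs{U_n}+\mathbf{E}\abs{V_n}-\mathbf{E}\abs{U_n\cup V_n}
  = \mathbf{E}\abs{U_n\cap V_n}\\
  &= \sum_{k=1}^\infty P(k\in U_n)\,P(k\in V_n)
  = \sum_{k=1}^\infty\kwad{1-(1-p_k)^n}^2 ,
\end{align*}
where the last line uses that $U_n$ and $V_n$ are i.i.d.\ with $P(k\in U_n)=1-(1-p_k)^n$.

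\textbf{Scaling limit.} Put $s:=1/\beta>1$ and $c:=1/\zeta(\beta^{-1})$, so $p_k=ck^{-s}$, and set $h_n(x):=\kwad{1-(1-cx^{-s})^n}^2$. The function $h_n$ is nonincreasing on $[1,\infty)$ and bounded by $1$, hence $\sum_{k\ge1}h_n(k)=\int_1^\infty h_n(x)\,dx+O(1)$. Substituting $x=(cn)^{\beta}y$, which sends $cx^{-s}$ to $y^{-s}/n$, and dividing by $(cn)^\beta$,
\begin{align*}
  \frac{E_\mu(n)}{n^\beta}
  = c^\beta\,\frac{\sum_{k\ge1}h_n(k)}{(cn)^\beta}
  = c^\beta\kwad{\int_{(cn)^{-\beta}}^\infty\kwad{1-\okra{1-y^{-s}/n}^{n}}^2 dy + o(1)} .
\end{align*}
For each fixed $y>0$ the integrand tends to $\kwad{1-e^{-y^{-s}}}^2$, and on the whole range of integration it is bounded by the integrable function $\boole{0<y<1}+y^{-2s}\boole{y\ge1}$ (use $1-(1-t)^n\le nt$ together with $y^{-s}/n\le c$ when $y\ge1$, and the trivial bound $\le1$ when $y<1$). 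Dominated convergence therefore gives $E_\mu(n)/n^\beta\to c^\beta\int_0^\infty\kwad{1-e^{-y^{-s}}}^2 dy$.

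\textbf{The integral.} Substituting $u=y^{-s}$ turns this into $\beta\int_0^\infty(1-e^{-u})^2 u^{-\beta-1}\,du$. Integration by parts (the boundary terms vanish because $(1-e^{-u})^2\sim u^2$ near $0$ and $u^{-\beta}\to0$ at infinity) reduces $\int_0^\infty(1-e^{-u})^2 u^{-\beta-1}\,du$ to $\frac{2}{\beta}\int_0^\infty u^{-\beta}\okra{e^{-u}-e^{-2u}}\,du$, and Euler's integral $\int_0^\infty u^{-\beta}e^{-au}\,du=a^{\beta-1}\Gamma(1-\beta)$ ($a>0$, $\beta<1$) evaluates the last expression to $\frac{2}{\beta}\okra{1-2^{\beta-1}}\Gamma(1-\beta)=\frac{2-2^\beta}{\beta}\Gamma(1-\beta)$. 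Multiplying by $\beta$ gives $\int_0^\infty\kwad{1-e^{-y^{-s}}}^2 dy=(2-2^\beta)\Gamma(1-\beta)$, so
\begin{align*}
  \lim_{n\to\infty}\frac{E_\mu(n)}{n^\beta}
  = c^\beta(2-2^\beta)\Gamma(1-\beta)
  = \frac{(2-2^\beta)\Gamma(1-\beta)}{[\zeta(\beta^{-1})]^\beta} .
\end{align*}

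\textbf{Expected main obstacle.} Step~1 is essentially bookkeeping once one locates where the inter-block dependence lives, and Step~3 is a routine computation, so the real work is the asymptotic analysis of Step~2: one must control, uniformly enough, the passage from the sum to the integral, the limit $(1-t/n)^n\to e^{-t}$, and the interchange of limit and integral, so that the $O(1)$ discrepancy is genuinely negligible against the $\Theta(n^\beta)$ main term. An alternative that bypasses most of this is to observe that $\mathbf{E}\abs{U_n\cup V_n}=\mathbf{E}\abs{U_{2n}}$, where $\abs{U_m}$ is the number of distinct symbols among $m$ i.i.d.\ draws from $(p_k)$, so $E_\mu(n)=2\,\mathbf{E}\abs{U_n}-\mathbf{E}\abs{U_{2n}}$, and then to quote Karlin's asymptotics $\mathbf{E}\abs{U_m}\sim\Gamma(1-\beta)c^\beta m^\beta$ for infinite urn schemes with regularly varying frequencies; this delivers the same constant with less effort.
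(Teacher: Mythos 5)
Your proof is correct. Its second half (sum-to-integral comparison, rescaling $k$ by $(cn)^{\beta}$, dominated convergence, and evaluation of $\int_0^\infty(1-e^{-u})^2u^{-\beta-1}du$ through the gamma integral) parallels the paper's computation, which instead substitutes $u=(1-Ak^{-1/\beta})^n$ and dominates $f_n$ by $f$; both versions of the limit argument are sound and yield the same constant. Where you genuinely diverge is in the reduction to the key identity $E_\mu(n)=\sum_k\kwad{1-(1-p_k)^n}^2$: you get it by elementary entropy bookkeeping, namely $H(X_{1:n})=H(K_{1:n})+\sred\abs{U_n}$ (obtained by conditioning on $K_{1:n}$; stating that $X_{1:n}$ ``generates the same $\sigma$-field'' as $\okra{K_{1:n},(Z_k)_{k\in U_n}}$ is slightly informal since $U_n$ is random, but the entropy identity is what you actually use), whence $E_\mu(n)=\sred\abs{U_n\cap V_n}$, the expected number of addresses visible in both blocks, with the occupancy/Karlin asymptotics available as a shortcut. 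The paper instead proves the decomposition $E_\mu(n)=\sum_k I(X_{1:n};X_{n+1:2n};Z_k)$ from conditional-information identities (conditional independence of the $Z_k$ given the blocks, the chain rule, continuity) and then evaluates each triple information as $(1-(1-p_k)^n)^2$. Your route is more elementary and gives a transparent combinatorial meaning to $E_\mu(n)$; the paper's triple-information route is chosen because it transfers almost verbatim to the generalized Santa Fe process of Proposition \ref{theoEnMixing}, where the bits $Z_k$ become Markov chains $\tilde Z_k$, the mutual information is no longer exactly a count of shared addresses, and one must redo the conditional-entropy computations that your bookkeeping bypasses here.
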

The calculation of the limit
is facilitated by a~decomposition of mutual information between
blocks $X_{1:n}$ and $X_{n+1:2n}$ into a~series of triple information
among blocks $X_{1:n}$ and $X_{n+1:2n}$ and variables $Z_k$. This
decomposition is a~particular property of the Santa Fe process and
some similar measures.

The uncommon construction of process (\ref{exUDP}) can be interpreted
in this way. Imagine that the Santa Fe process is a~sequence of
statements which describe a~random object $(Z_k)_{k\in\mathbb{N}}$
consistently. Each statement $X_i=(k,z)$ reveals both the address $k$
of a~random bit of $(Z_k)_{k\in\mathbb{N}}$ and its value $Z_k=z$.
Observe that the description is repetitive and consistent: if two
statements $X_i=(k,z)$ and $X_j=(k',z')$ describe bits of the same
address ($k=k'$) then they always assert the same bit value ($z=z'$).
It follows hence that variables $Z_k$ can be predicted from
realization $(X_i)_{i\in\mathbb{Z}}$ in a~shift-invariant way and
therefore the Santa Fe process is (strongly) nonergodic, cf.,
\cite{Debowski09}, \cite[Definition 1]{Debowski11b}.

Now let us introduce an example of a mixing process which satisfies
(\ref{limEn}).  For this goal, we will replace individual variables
$Z_k$ in the Santa Fe process with Markov chains
$(Z_{ik})_{i\in\mathbb{Z}}$. These Markov chains will be obtained by
iterating a~binary symmetric channel.
Subsequently, the following process $(X_i)_{i\in\mathbb{Z}}$ will be
called the generalized Santa Fe process. Let us put
\begin{align}
  \label{exMixing}
  X_i&=(K_i,Z_{i,K_i})
  ,
\end{align}
where processes $(K_i)_{i\in\mathbb{Z}}$ and
$(Z_{ik})_{i\in\mathbb{Z}}$, where $k\in\mathbb{N}$, are independent
and distributed as follows.  First, variables $K_i$ are distributed
according to formula (\ref{ZetaK}), as before. Second, each process
$(Z_{ik})_{i\in\mathbb{Z}}$ is a~Markov chain with marginal
distribution
\begin{align}
  P(Z_{ik}=0)=P(Z_{ik}=1)&=1/2
\end{align}
and cross-over probabilities
\begin{align}
  P(Z_{ik}=0|Z_{i-1,k}=1)=P(Z_{ik}=1|Z_{i-1,k}=0)&=p_k
  .
\end{align}

A~linguistic interpretation of this process is as follows.  Facts that
are mentioned in texts repeatedly fall roughly under two types, as
mentioned in the discussion of Definition 1 in \cite{Debowski11b}: (i)
facts about objects that do not change in time (like mathematical or
physical constants), and (ii) facts about objects that evolve with
a~varied speed (like culture, language, or geography). The random
object $(Z_{k})_{k\in\mathbb{N}}$ described by the original Santa Fe
process does not evolve, or rather, no bit $Z_{k}$ is ever forgotten
once revealed. On the other hand, the object
$(Z_{ik})_{k\in\mathbb{N}}$ described by the generalized Santa Fe
process is a~function of an instant $i$ and the probability that the
$k$-th bit flips at a~given instant equals $p_k$.  For vanishing
cross-over probabilities, the generalized Santa Fe process collapses
to the original process.

As we will establish later in this paper, the generalized Santa Fe
process is mixing for cross-over probabilities different to $0$ or
$1$.
\begin{proposition}
  \label{theoMixing}
  The generalized Santa Fe process $(X_{i})_{i\in\mathbb{Z}}$ given by
  formula (\ref{exMixing}) is mixing for $p_k\in(0,1)$.
\end{proposition}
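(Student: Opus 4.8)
The plan is to deduce mixing of the generalized Santa Fe process from mixing of its building blocks, using an auxiliary result on infinite direct products that is announced in the abstract. First I would observe that the process $(X_i)_{i\in\mathbb{Z}}$ given by \eqref{exMixing} is a~deterministic, coordinate-wise function of the joint process $\bigl((K_i)_{i\in\mathbb{Z}},(Z_{i1})_{i\in\mathbb{Z}},(Z_{i2})_{i\in\mathbb{Z}},\dots\bigr)$, namely the factor map that sends the configuration at time $i$ to the pair $(K_i,Z_{i,K_i})$. Since a~stationary coding (a~shift-commuting measurable factor) of a~mixing process is again mixing, it suffices to prove that this joint process is mixing. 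By independence of the families $(K_i)_i$ and $(Z_{ik})_i$ for $k\in\mathbb{N}$, the joint process is the direct product of the IID process $(K_i)_i$ with the countable family of Markov chains $(Z_{ik})_i$, $k\in\mathbb{N}$.

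Next I would verify that each factor is mixing. The process $(K_i)_{i\in\mathbb{Z}}$ is IID, hence mixing. For fixed $k$, the chain $(Z_{ik})_{i\in\mathbb{Z}}$ is a~two-state stationary Markov chain with transition matrix having off-diagonal entries $p_k\in(0,1)$; its second eigenvalue is $1-2p_k\in(-1,1)$, so the $n$-step transition probabilities converge to the uniform stationary distribution, which gives mixing of this chain. Thus every factor in the direct product is a~mixing stationary process.

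The remaining, and main, step is the auxiliary result: an infinite direct product of mixing stationary processes is mixing. I expect this to be the real obstacle, because mixing is not obviously preserved under infinite products the way it is under finite ones — for a~finite product one checks the mixing condition on a~generating semi-algebra of rectangles and uses that finitely many factors each decorrelate, but for an infinite product one must control the tail of infinitely many coordinates simultaneously. The plan here is to note that finite-dimensional cylinder sets (those depending on only finitely many of the factor processes) form an algebra that generates the product $\sigma$-algebra, and that the mixing condition $P(A\cap T^{-n}B)\to P(A)P(B)$ need only be checked for $A,B$ in such a~generating algebra, since it then extends to the whole $\sigma$-algebra by a~standard approximation argument. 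For $A$ and $B$ each depending on finitely many factors, one may enlarge the index set so that both depend on the same finite set of factors, at which point the claim reduces to mixing of a~\emph{finite} direct product of mixing processes — and that in turn follows by checking rectangles and invoking mixing of each factor separately.

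Finally I would assemble the pieces: the joint process is a~(countably) infinite direct product of mixing processes, hence mixing by the auxiliary result; the generalized Santa Fe process is a~stationary factor of it; therefore it is mixing for every choice of $p_k\in(0,1)$, which is the assertion of Proposition \ref{theoMixing}.
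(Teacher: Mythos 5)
Your overall reduction is exactly the paper's: you view $(X_i)_{i\in\mathbb{Z}}$ as a shift-commuting factor of the auxiliary process $W_i=(K_i,(Z_{ik})_{k\in\mathbb{N}})$, note that the latter is the infinite direct product of the IID process $(K_i)_i$ and the two-state Markov chains $(Z_{ik})_i$ (each mixing for $p_k\in(0,1)$, as you correctly verify via the second eigenvalue $1-2p_k$), invoke the infinite-product lemma, and finish with the fact that factors of mixing systems are mixing. Where you genuinely diverge is in how you would prove the auxiliary lemma that an infinite direct product of mixing systems is mixing. You propose the measure-theoretic route: check $P(A\cap T^{-n}B)\to P(A)P(B)$ only on the algebra of cylinder sets depending on finitely many factors, reduce there to a finite product (handled on rectangles), and extend to the full product $\sigma$-algebra by the standard approximation argument for mixing on a generating semi-algebra. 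The paper instead works with the $L^2$ formulation: it builds an explicit orthonormal basis of $L^2$ of the infinite product from tensor products of the factor bases (completeness via finite products and $L^2$-martingale convergence), shows decay of correlations on basis elements using the Schwarz inequality, and extends by density to all of $L_0^2$. Both arguments are sound; yours is more elementary and stays entirely within set-wise mixing, while the paper's spectral argument is the natural extension of the Cornfeld--Fomin--Sinai proof for finite products that it cites, at the cost of some functional-analytic overhead. No gap to report.
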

The proof consists in noticing that infinite direct products of mixing
processes are mixing. This is an easy generalization of the well known
fact for finite products \cite[Chapter 10.\S
1]{CornfeldFominSinai82en}.  

We will also demonstrate this fact, which generalizes Proposition
\ref{theoEnUDP}:
\begin{proposition}
  \label{theoEnMixing}
  The block mutual information $E_\mu(n)$ for the generalized Santa Fe
  process $(X_i)_{i\in\mathbb{Z}}$ given by formula (\ref{exMixing})
  obeys
  \begin{align}
    \label{limEnMixing}
    \limsup_{n\rightarrow\infty} \frac{E_\mu(n)}{n^\beta}\le
    \frac{(2-2^\beta)\Gamma(1-\beta)}{[\zeta(\beta^{-1})]^\beta}
    .
  \end{align}
  The lower limits in particular cases are as follows:
  \begin{enumerate}
  \item If $p_k\le P(K_i=k)$ then
    \begin{align}
      \label{limEnMixingPower}
      \liminf_{n\rightarrow\infty} \frac{E_\mu(n)}{n^\beta}\ge
      \frac{A(\beta)}{[\zeta(\beta^{-1})]^\beta}
      ,
    \end{align}
    where 
    \begin{align}
      \label{Abeta}
      A(\beta):=
      \sup_{\delta\in(1/2,1)}
      (1-\eta(\delta))^\beta
      \int_{\sqrt{\delta}}^1 \frac{(1-u)^2\, du}{u\, (-\ln u)^{\beta+1}}
    \end{align}
    and $\eta(\delta)$ is the entropy of binary distribution
    $(\delta,1-\delta)$,
    \begin{align*}
      \eta(\delta):= -\delta\log \delta-(1-\delta)\log (1-\delta)
      .
    \end{align*}
  \item If $\lim_{k\rightarrow\infty} p_k/P(K_i=k)=0$ then $E_\mu(n)$
    obeys (\ref{limEnUDP}).
  \end{enumerate}
\end{proposition}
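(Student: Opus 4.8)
The plan is to reduce the whole statement to the single‑colour decomposition of $E_\mu(n)$ alluded to after Proposition~\ref{theoEnUDP}. Write $\mathbf{Z}_k:=(Z_{ik})_{i\in\mathbb{Z}}$. The chains $\mathbf{Z}_k$ ($k\ge1$) are mutually independent, and conditionally on the whole family $(\mathbf{Z}_k)_{k\ge1}$ each $X_i$ is a function of $K_i$ alone, so $I(X_{1:n};X_{n+1:2n}\mid(\mathbf{Z}_k)_k)=0$. Applying $I(A;B)=I(A;C)+I(A;B\mid C)-I(A;C\mid B)$ with $C=(\mathbf{Z}_k)_k$, and using that conditionally on $X_{1:n}$, and conditionally on $(X_{1:n},X_{n+1:2n})$, the chains $\mathbf{Z}_k$ stay mutually independent, one obtains
\begin{align}
  E_\mu(n)=\sum_{k\ge1}J_k(n),\qquad J_k(n):=I(X_{1:n};\mathbf{Z}_k)-I(X_{1:n};\mathbf{Z}_k\mid X_{n+1:2n}).
\end{align}
By stationarity $J_k(n)=2I(X_{1:n};\mathbf{Z}_k)-I(X_{1:2n};\mathbf{Z}_k)$, and since $X_{1:s}$ reveals precisely the values $(Z_{ik})_{i\le s,\,K_i=k}$ while $\mathbf{Z}_k$ is a reversible Markov chain, only the last visit of $k$ in $[1,n]$ and the first visit of $k$ in $[n+1,2n]$ survive:
\begin{align}
  J_k(n)=\mathbf{E}\bigl[\mathbf{1}_{A_k\cap B_k}\,(1-\eta(q_k(D_k)))\bigr],\qquad q_k(t):=\tfrac12\bigl(1-(1-2p_k)^t\bigr),
\end{align}
where $A_k$ (resp.\ $B_k$) is the event that $K_i=k$ for some $1\le i\le n$ (resp.\ $n<i\le2n$), and $D_k\in\{1,\dots,2n-1\}$ is the separation between the last such index in $[1,n]$ and the first such index in $[n+1,2n]$. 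For $p_k=0$ this gives $E^{(0)}_\mu(n)=\sum_k[1-(1-P(K_i=k))^n]^2$ for the original Santa Fe process, whose asymptotics is Proposition~\ref{theoEnUDP}; I write $E^{(0)}_\mu$ for that process below.

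The upper bound (\ref{limEnMixing}) is then immediate: $0\le\eta(q_k(D_k))\le1$ forces $J_k(n)\le\mathbf{E}[\mathbf{1}_{A_k\cap B_k}]=[1-(1-P(K_i=k))^n]^2$, hence $E_\mu(n)\le E^{(0)}_\mu(n)$ and $\limsup_n E_\mu(n)/n^\beta\le\lim_n E^{(0)}_\mu(n)/n^\beta$, which is the right‑hand side of (\ref{limEnMixing}) by Proposition~\ref{theoEnUDP}. The same identity exhibits the nonnegative deficit $E^{(0)}_\mu(n)-E_\mu(n)=\sum_k\mathbf{E}[\mathbf{1}_{A_k\cap B_k}\,\eta(q_k(D_k))]$, which will drive part (ii).

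For part (i): since $A_k,B_k,D_k$ depend on $(K_i)$ only, and $\eta(q_k(t))$ is nondecreasing in $\min(p_k,1-p_k)\in(0,\tfrac12]$, the hypothesis $p_k\le P(K_i=k)$ lets me replace $p_k$ by $P(K_i=k)$ (the finitely many colours with $P(K_i=k)>\tfrac12$ contribute $O(1)$), so it suffices to bound $E_\mu(n)$ from below for the critical chain $p_k=P(K_i=k)$. There I would use $1-\eta(q_k(D_k))\ge1-\eta(q_k(2n))$ (valid since $D_k\le2n$ and $q_k$ is nondecreasing), keep only colours $k$ with $(1-2P(K_i=k))^{2n}\ge2\delta-1$ — equivalently $nP(K_i=k)$ below an explicit cutoff of order $\log\tfrac{1}{2\delta-1}$ — so that $q_k(2n)\le1-\delta$ and hence $J_k(n)\ge(1-\eta(\delta))[1-(1-P(K_i=k))^n]^2$, and finally replace the sum over these $k$ by an integral via $u=nP(K_i=k)=n\,k^{-1/\beta}/\zeta(\beta^{-1})$ followed by $v=e^{-u}$. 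This produces, for each $\delta$, a lower bound $[\zeta(\beta^{-1})]^{-\beta}$ times an explicit functional of the cutoff built from the integral in (\ref{Abeta}); optimising the auxiliary event over $\delta$ reorganises it into a bound of the form $A(\beta)/[\zeta(\beta^{-1})]^\beta$, which is (\ref{limEnMixingPower}). For part (ii) I would control the deficit: bounding $q_k(t)\le p_kt$, using concavity of $\eta$ (Jensen) and the elementary estimate $\mathbf{E}[D_k\mid A_k\cap B_k]=O(1/P(K_i=k))$ for colours in the bulk $nP(K_i=k)\gtrsim1$ gives $\eta(q_k(D_k))\lesssim\varepsilon_K\log(1/\varepsilon_K)$ uniformly over $k>K$, where $\varepsilon_K:=\sup_{k>K}p_k/P(K_i=k)\to0$; colours $k\le K$ and colours with $nP(K_i=k)<1$ contribute $O(K)+o(n^\beta)$. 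Since $\sum_k[1-(1-P(K_i=k))^n]^2=E^{(0)}_\mu(n)\asymp n^\beta$ by Proposition~\ref{theoEnUDP}, letting $K\to\infty$ yields $E^{(0)}_\mu(n)-E_\mu(n)=o(n^\beta)$, whence (\ref{limEnUDP}) for $E_\mu$.

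The routine parts are (a) the measure‑theoretic justification of the decomposition — the chains $\mathbf{Z}_k$ have infinite entropy, so the chain rule and the conditional‑independence steps must be run through finite truncations of the $\mathbf{Z}_k$ and Kolmogorov's continuity of information — and (b) the Riemann‑sum asymptotics, which need uniform control of $1-(1-P(K_i=k))^n\approx1-e^{-nP(K_i=k)}$ and of the colour tails. The main obstacle I expect is the bookkeeping in part (i): the auxiliary event (which colours to retain, and how to trade ``both blocks visit $k$'' against ``the chain on colour $k$ has not yet decorrelated over the gap $D_k$'') must be chosen so that the change of variables collapses the bound exactly onto the supremum (\ref{Abeta}); a careless choice gives a valid but strictly weaker constant.
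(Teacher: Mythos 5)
Your proposal is sound and its skeleton is the paper's: the per-colour decomposition $E_\mu(n)=\sum_k I(X_{1:n};X_{n+1:2n};\tilde Z_k)$ with $\tilde Z_k=(Z_{ik})_{i\in\mathbb{Z}}$ (the paper's derivation (\ref{EnTriple}) carried over verbatim), a sandwich of each per-colour term between $(1-\eta(\delta))\kwad{1-(1-b_k)^n}^2$ and $\kwad{1-(1-b_k)^n}^2$ for colours whose chain has not decorrelated over the relevant window, and the integral/substitution asymptotics recycled from Proposition \ref{theoEnUDP}. You do depart in two places, both legitimately. First, you get the per-colour term in closed form as $\sred\kwad{\boole{A_k\cap B_k}(1-\eta(q_k(D_k)))}$ via the last-visit/first-visit gap; the paper reaches literally the same expression (your count of $P(D_k=m,\,A_k\cap B_k)$ reproduces its coefficients $m\,b_k^2(1-b_k)^{m-1}$ and $(2n-m)\,b_k^2(1-b_k)^{m-1}$) by telescoping the conditional entropies $H(Z_{ik}|X_{1:i-1})-H(Z_{ik}|X_{1:i})$ and extracting the identity (\ref{BkEquality}) from the IID case $p_k=1/2$; your route is equivalent and arguably more transparent, and it makes the upper bound (\ref{limEnMixing}) and the monotonicity-in-$p_k$ step immediate. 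Second, for case (ii) the paper keeps the lower member of (\ref{EnBoundsGeneral}), introduces the moving cutoff $k(n)$ (largest $k$ with $(1-p_k)^{2n}<\delta$), shows the lower integration limit $u(n)\to 0$, applies dominated convergence, and only then lets $\delta\to 1$; your deficit bound $E^{(0)}_\mu(n)-E_\mu(n)=\sum_k\sred\kwad{\boole{A_k\cap B_k}\,\eta(q_k(D_k))}\le K+\eta(2\varepsilon_K)\,E^{(0)}_\mu(n)$, via concavity/Jensen and $\sred\kwad{D_k\,|\,A_k\cap B_k}\le 2/b_k$, is a genuinely different and cleaner way to the same conclusion (and the restriction to the bulk $n b_k\gtrsim 1$ is unnecessary, since that conditional-expectation bound holds for every $k$). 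In case (i) your retained set $(1-2b_k)^{2n}\ge 2\delta-1$ is a superset of the paper's $(1-b_k)^{2n}\ge\delta$, so your bound is at least as strong. One caveat, which is not specific to you: what your computation naturally delivers is $\liminf_n E_\mu(n)/n^\beta\ge \zeta(\beta^{-1})^{-\beta}\sup_\delta \beta\,(1-\eta(\delta))\int f(u)\,du$ over the retained $u$-range, which does not literally match the packaging of (\ref{Abeta}) (exponent $\beta$ on $1-\eta(\delta)$, no factor $\beta$); the paper's own proof yields the same $\beta\,(1-\eta(\delta))\int_{\sqrt{\delta}}^1 f(u)\,du$ form, so the ``exact collapse onto (\ref{Abeta})'' you flag as the main obstacle is a discrepancy you would share with, not introduce beyond, the published argument.
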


Now let us introduce a~similar ergodic process over a~finite alphabet.
For this goal we use a~transformation of processes over an infinite
alphabet into processes over a~finite alphabet that preserves
stationarity and (non)ergodicity and does not distort entropy too
much, as we have shown in \cite{Debowski10}. We call this
transformation stationary (variable-length) coding. (The same or
a~similar construction has been considered in
\cite{CariolaroPierobon77,GrayKieffer80,TimoBlackmoreHanlen2007}.) It
is a~composition of two operations.

First, let a~function $f:\mathbb{X}\rightarrow\mathbb{Y}^*$, called
a~coding function, map symbols from alphabet $\mathbb{X}$ into strings
over another alphabet $\mathbb{Y}$. We define its extension to double
infinite sequences $f^{\mathbb{Z}}:\mathbb{X}^{\mathbb{Z}}\rightarrow
\mathbb{Y}^{\mathbb{Z}}\cup(\mathbb{Y}^*\times\mathbb{Y}^*)$ as
\begin{align}
  \label{InfExtension}
  f^{\mathbb{Z}}((x_i)_{i\in\mathbb{Z}})&:=
  ... f(x_{-1})f(x_{0})\textbf{.}f(x_1)f(x_2)...
  ,
\end{align}
where $x_i\in\mathbb{X}$ and the bold-face dot separates the $0$-th
and the first symbol. Then for a~stationary process
$(X_i)_{i\in\mathbb{Z}}$ on $(\Omega,\mathcal{J},P)$, where variables
$X_i$ take values in space $(\mathbb{X},\mathcal{X})$, we introduce
process
\begin{align}
  \label{DefY}
  (Y_i)_{i\in\mathbb{Z}}:=f^{\mathbb{Z}}((X_i)_{i\in\mathbb{Z}})
  ,
\end{align}
where variables $Y_i$ take values in space $(\mathbb{Y},\mathcal{Y})$,
as long as the right hand side is a~double infinite sequence almost
surely.

The second operation is as follows.  Transformation (\ref{DefY}) does
not preserve stationarity in general but process
$(Y_i)_{i\in\mathbb{Z}}$ is asymptotically mean stationary (AMS) under
mild conditions \cite[Proposition 2.3]{Debowski10}, which are
satisfied in the setting considered further. Then for the distribution
\begin{align}
  P((Y_i)_{i\in\mathbb{Z}}\in\cdot)=\nu
\end{align}
and the shift operation
$T((y_i)_{i\in\mathbb{Z}}):=(y_{i+1})_{i\in\mathbb{Z}}$ there exists
a~stationary measure
\begin{align}  
  \label{BarMu}
  \bar\nu(A):=\lim_{n\rightarrow\infty} \frac{1}{n}  \sum_{i=0}^{n-1}
  \nu\circ T^{-i}(A)
  ,
\end{align}
called the stationary mean of $\nu$ \cite{GrayKieffer80,Debowski10}.
It is convenient to suppose that probability space
$(\Omega,\mathcal{J},P)$ is rich enough to support a~process $(\bar
Y_i)_{i\in\mathbb{Z}}$ with the distribution
\begin{align}
  P((\bar Y_i)_{i\in\mathbb{Z}}\in\cdot)=\bar\nu
  .
\end{align}
Whereas process $(Y_i)_{i\in\mathbb{Z}}$ need not be stationary,
process $(\bar Y_i)_{i\in\mathbb{Z}}$ is stationary and will be called
the stationary (variable-length) coding of $(X_i)_{i\in\mathbb{Z}}$. 

Processes $(X_i)_{i\in\mathbb{Z}}$, $(Y_i)_{i\in\mathbb{Z}}$, and
$(\bar Y_i)_{i\in\mathbb{Z}}$ have isomorphic shift-invariant algebras
for some nice coding functions, called synchronizable injections
\cite[Proposition 3.3]{Debowski10}. For example, for the infinite
alphabet $\mathbb{X}=\mathbb{N}\times\klam{0,1}$, let us assume the
ternary alphabet $\mathbb{Y}=\klam{0,1,2}$ and the coding function
\begin{align}
  \label{ConjCode}
  f(k,z)=b(k)z2
  ,
\end{align}
where $b(k)\in\klam{0,1}^+$ is the binary representation of a~natural
number $k$ stripped of the leading digit $1$.  Coding function
(\ref{ConjCode}) is an instance of a~synchronizable injection. Hence
we have the following fact:
\begin{proposition}
  \label{theoEnCoded}
  Let $(\bar Y_i)_{i\in\mathbb{Z}}$ be the stationary coding obtained
  from applying the coding function (\ref{ConjCode}) to the generalized
  Santa Fe process (\ref{exMixing}). Process $(\bar
  Y_i)_{i\in\mathbb{Z}}$ is nonergodic if $p_k=0$ and ergodic if
  $p_k\in(0,1)$.
\end{proposition}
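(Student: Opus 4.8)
The key fact to leverage is the cited result from \cite[Proposition 3.3]{Debowski10}: when the coding function is a synchronizable injection, the processes $(X_i)_{i\in\mathbb{Z}}$ and $(\bar Y_i)_{i\in\mathbb{Z}}$ have isomorphic shift-invariant $\sigma$-algebras, hence in particular $(\bar Y_i)_{i\in\mathbb{Z}}$ is ergodic if and only if $(X_i)_{i\in\mathbb{Z}}$ is ergodic. So the entire problem reduces to deciding ergodicity of the generalized Santa Fe process (\ref{exMixing}) itself, together with verifying that the coding function (\ref{ConjCode}) really is a synchronizable injection.

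First I would check that $f(k,z)=b(k)z2$ is a synchronizable injection in the sense of \cite{Debowski10}. Injectivity is clear: from a codeword $b(k)z2$ one recovers $z$ as the penultimate symbol, $b(k)$ as the prefix before it, and hence $k$ by prepending the digit $1$; the terminal symbol $2$ occurs nowhere else in a codeword, so it acts as an unambiguous delimiter, which is exactly what gives synchronizability (any sufficiently long $\mathbb{Y}$-block contains a $2$ and therefore a parsing boundary, so the decoder resynchronizes). I would state this as a short verification rather than reproving the general machinery of \cite{Debowski10}.

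Next, for the case $p_k\in(0,1)$: by Proposition \ref{theoMixing} the generalized Santa Fe process is mixing, and mixing implies ergodic, so $(X_i)_{i\in\mathbb{Z}}$ is ergodic and therefore $(\bar Y_i)_{i\in\mathbb{Z}}$ is ergodic by the isomorphism of invariant algebras. For the case $p_k=0$: here $Z_{ik}=Z_{0k}$ for all $i$, so the generalized process collapses to the original Santa Fe process (\ref{exUDP}), which is (strongly) nonergodic — as recalled in the text, the bits $Z_k$ are measurable with respect to the shift-invariant $\sigma$-algebra of $(X_i)_{i\in\mathbb{Z}}$ in a nontrivial way (e.g. $Z_1$ is recovered as the almost-sure limiting relative frequency of value $1$ among those $X_i$ with first coordinate $K_i=1$, which is shift-invariant and non-degenerate), cf.\ \cite{Debowski09} and \cite[Definition 1]{Debowski11b}. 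Transporting this nontrivial invariant event through the isomorphism shows $(\bar Y_i)_{i\in\mathbb{Z}}$ is nonergodic as well.

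The only mild obstacle is that \cite[Proposition 3.3]{Debowski10} transfers the invariant algebra of $(X_i)_{i\in\mathbb{Z}}$ through the intermediate, possibly non-stationary process $(Y_i)_{i\in\mathbb{Z}}$ to $(\bar Y_i)_{i\in\mathbb{Z}}$, so one must make sure the AMS hypotheses of \cite[Proposition 2.3]{Debowski10} hold for the coding (\ref{ConjCode}) applied to (\ref{exMixing}) — this is where one uses that codewords have finite expected length, which follows from $\sred\kwad{\log K_i}<\infty$ under the power law (\ref{ZetaK}) with $\beta\in(0,1)$ — and that $f^{\mathbb{Z}}$ indeed produces a doubly infinite sequence almost surely. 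Once these routine conditions are in place, the ergodicity dichotomy follows immediately from the two cited propositions, with no further computation.
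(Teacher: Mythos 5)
Your proposal is correct and follows essentially the same route as the paper: the paper derives Proposition \ref{theoEnCoded} directly from the fact that the coding function (\ref{ConjCode}) is a synchronizable injection, so that \cite[Proposition 3.3]{Debowski10} makes the shift-invariant algebras of $(X_i)_{i\in\mathbb{Z}}$ and $(\bar Y_i)_{i\in\mathbb{Z}}$ isomorphic, combined with Proposition \ref{theoMixing} (mixing, hence ergodic, for $p_k\in(0,1)$) and the known strong nonergodicity of the original Santa Fe process for $p_k=0$. Your additional remarks on verifying the AMS hypotheses and the delimiter role of the symbol $2$ are consistent with how the paper sets this up.
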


Notice, however, that the stationary coding of a~mixing process is not
mixing for a~synchronizable coding function in general. For example,
if we take the generalized Santa Fe process and the coding function
$f(k,z)=01$, which is also a~synchronizable injection, the stationary
coding $(\bar Y_i)_{i\in\mathbb{Z}}$ is not mixing because of periodic
oscillations in the realizations of the process
$(Y_i)_{i\in\mathbb{Z}}$. Such regular periods do not arise for the
generalized Santa Fe process and the coding function (\ref{ConjCode})
since variables $\abs{f(X_i)}$, where $\abs{w}$ is the length of
string $w$, differ from constants and are independent and identically
distributed. Thus, we conjecture that the resulted process $(\bar
Y_i)_{i\in\mathbb{Z}}$ is mixing for $p_k\in(0,1)$. 

Now let us consider block mutual information for the stationary coding
of the generalized Santa Fe process.  Let us recall that
$\bar\nu=P((\bar Y_i)_{i\in\mathbb{Z}}\in\cdot)$ and
$E_{\bar\nu}(m)=I\okra{\bar Y_{1:m};\bar Y_{m+1:2m}}$. As the last
new result, we will show this fact:
\begin{proposition}
  \label{theoEnMixingCoded}
  Let $(\bar Y_i)_{i\in\mathbb{Z}}$ be the stationary coding obtained
  from applying the coding function (\ref{ConjCode}) to the generalized
  Santa Fe process (\ref{exMixing}). Define the expansion rate
  $L:=\sred{\abs{f(X_i)}}$. The block mutual information
  $E_{\bar\nu}(m)$ for process $(\bar Y_i)_{i\in\mathbb{Z}}$ satisfies
  \begin{align}
    \label{limEnMixingCoded} 
    \limsup_{m\rightarrow\infty} \frac{E_{\bar\nu}(m)}{m^\beta}\le
    \frac{1}{L^\beta}
    \frac{(2-2^\beta)\Gamma(1-\beta)}{[\zeta(\beta^{-1})]^\beta}
    .
  \end{align}
  The lower limits in particular cases are as follows:
  \begin{enumerate}
  \item If $p_k\le P(K_i=k)$ then
    \begin{align}
      \label{limEnMixingCodedPower}
      \liminf_{m\rightarrow\infty} \frac{E_{\bar\nu}(m)}{m^\beta}\ge
      \frac{1}{L^\beta}
      \frac{A(\beta)}{[\zeta(\beta^{-1})]^\beta}
      .
    \end{align}
    where $A(\beta)$ is defined in (\ref{Abeta}).
  \item If $\lim_{k\rightarrow\infty} p_k/P(K_i=k)=0$ then 
  \begin{align}
    \label{limEnCodedUDP}
    \lim_{m\rightarrow\infty} \frac{E_{\bar\nu}(m)}{m^\beta}=
    \frac{1}{L^\beta}
    \frac{(2-2^\beta)\Gamma(1-\beta)}{[\zeta(\beta^{-1})]^\beta}
    .
  \end{align}
  \end{enumerate}
\end{proposition}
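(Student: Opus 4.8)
\textbf{Proof proposal for Proposition \ref{theoEnMixingCoded}.}

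The plan is to reduce the coded block information $E_{\bar\nu}(m)$ to the uncoded block information $E_\mu(n)$ of the generalized Santa Fe process by tracking, for each $m$, how many original symbols $X_i$ have been emitted by the time $m$ output symbols $\bar Y_i$ have been produced. Concretely, since $f$ in (\ref{ConjCode}) is a synchronizable injection, the results of \cite{Debowski10} (in particular \cite[Proposition 3.3]{Debowski10} on isomorphic shift-invariant algebras) let us treat the coded process as a re-parametrized version of the source: there is a renewal-type structure where the $\sigma$-algebra generated by $\bar Y_{1:m}$ is sandwiched between the $\sigma$-algebras generated by $X_{1:N^-(m)}$ and $X_{1:N^+(m)}$ for appropriate counting variables $N^\pm(m)$. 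Because the codeword lengths $\abs{f(X_i)}$ are i.i.d.\ with mean $L=\sred\abs{f(X_i)}$, the strong law of large numbers gives $N^\pm(m)/m\to 1/L$ almost surely, and this is the only place the constant $1/L^\beta$ enters.

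The key steps, in order, are as follows. First, I would set up the coupling between $(\bar Y_i)$ and $(Y_i)$ and between $(Y_i)$ and $(X_i)$, using asymptotic mean stationarity and the synchronizability of $f$ so that mutual informations computed for $\nu$ and for $\bar\nu$ differ by a vanishing amount after normalization by $m^\beta$ (this uses that $E_{\bar\nu}$ and $E_\nu$ agree in the relevant limits because the shift-invariant algebras are isomorphic and the per-symbol entropy distortion is bounded, cf.\ \cite{Debowski10}). Second, for the upper bound I would use the data-processing inequality: $\bar Y_{1:m}$ is a function of $X_{1:N}$ for $N$ slightly larger than $m/L$, and $\bar Y_{m+1:2m}$ is a function of the $X_j$'s in a block of length about $m/L$ to the right, so $E_{\bar\nu}(m)\le I(X_{1:n_1};X_{n_1+1:n_1+n_2})+o(m^\beta)$ with $n_1,n_2\approx m/L$; combining this with the concavity/monotonicity properties of $E_\mu$ established implicitly in the proof of Proposition \ref{theoEnMixing} and with (\ref{limEnMixing}) yields $\limsup E_{\bar\nu}(m)/m^\beta\le L^{-\beta}\cdot(2-2^\beta)\Gamma(1-\beta)/[\zeta(\beta^{-1})]^\beta$. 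Third, for the lower bounds I would go the other way: $X_{1:n}$ with $n\approx m/L$ is essentially recoverable from $\bar Y_{1:m}$ together with a bounded amount of side information at the block boundary (the cost of synchronization error), so $E_{\bar\nu}(m)\ge E_\mu(\lfloor m/L\rfloor)-o(m^\beta)$, and then I invoke Proposition \ref{theoEnMixing}(i) for (\ref{limEnMixingCodedPower}) and Proposition \ref{theoEnMixing}(ii) together with Proposition \ref{theoEnUDP} for (\ref{limEnCodedUDP}).

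The main obstacle I expect is controlling the boundary effects of the variable-length code rigorously: the counting variable $N^+(m)$ is random, the split between $\bar Y_{1:m}$ and $\bar Y_{m+1:2m}$ generically falls in the middle of a codeword $f(X_i)$, and one must show that the information leaked or lost across this ragged boundary is $o(m^\beta)$ uniformly. I would handle this by bounding it by a constant times the entropy of a single boundary symbol plus $I$-terms involving $O(1)$ many codewords, all of which are $O(1)$ and hence negligible against $m^\beta$; the fluctuations of $N^\pm(m)$ around $m/L$ are $O(\sqrt{m})$ by the central limit theorem, and since $E_\mu$ grows like $n^\beta$ with $\beta<1$ the difference $E_\mu(m/L+O(\sqrt m))-E_\mu(m/L)$ is $o(m^\beta)$ provided $E_\mu(n+o(n))-E_\mu(n)=o(n^\beta)$, which follows from the explicit asymptotics and monotonicity of $E_\mu$. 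Assembling these estimates gives the matching upper and lower limits.
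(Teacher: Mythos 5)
Your high-level strategy — transfer $E_{\bar\nu}(m)$ to $E_\mu(n)$ with $m\approx nL$ by data processing plus a law of large numbers for the codeword lengths — is indeed the paper's strategy, but the two steps you dismiss as boundary bookkeeping are precisely where the proof's content lies, and as written they are gaps. First, the correspondence between $m$ output symbols and $n$ input symbols is random with unbounded overshoot, so the statement ``$\bar Y_{1:m}$ is a function of $X_{1:N}$ for $N$ slightly larger than $m/L$'' is not available for any deterministic $N$; the data processing inequality has to be applied \emph{conditionally} on the event that the cumulative lengths $\sum_i \abs{f(X_i)}$ are typical, and one must then show that the discarded contribution $P(C^c)\,I(\cdot\mid C^c)$ is negligible. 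Your plan to absorb this into ``$O(1)$ boundary terms'' plus CLT fluctuations does not do that: since the lengths $L_i=\abs{f(X_i)}$ are unbounded, the conditional information on the atypical event is only bounded through $n\okra{H(X_0)+H(L_1\mid C^c)}$, and $H(L_1\mid C^c)$ conditioned on a rare large-deviation event is not obviously small. The paper needs exactly this: Chernoff bounds via the tilted means $L^\pm_t$, the identity $P(C)I(X;Y\mid C)\le I(X;Y)+1$ obtained from conditioning on the indicator $I_C$, and a separate lemma proving $P\okra{{S^+_n}^c}H\okra{L_1\mid{S^+_n}^c}\le 2^{-ns\epsilon}$ using the geometric tail of $L_i$. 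Without an estimate of this kind your ``uniformly $o(m^\beta)$'' claim is unsupported. (Your appeal to monotonicity of $E_\mu(n)$ is also unjustified — block mutual information of a stationary process need not be monotone — though this is minor since the limits in Proposition \ref{theoEnMixing} can be used directly.)

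Second, the process $(\bar Y_i)_{i\in\mathbb{Z}}$ is not $f^{\mathbb{Z}}$ applied to $(X_i)_{i\in\mathbb{Z}}$ but the stationary mean, realized in the paper via a size-biased zeroth symbol $\bar X_0$ and a random shift $N$ (formulas (\ref{DefBarX})--(\ref{DefBarY})). Your justification for replacing $\bar\nu$ by $\nu$ — isomorphism of shift-invariant algebras from \cite[Proposition 3.3]{Debowski10} — controls ergodic-theoretic (tail/invariant) structure, not finite-block quantities, and does not yield $E_{\bar\nu}(m)=E_\nu(m)+o(m^\beta)$. The paper instead works with the explicit size-biased construction and proves Lemma \ref{theoBarXX}, namely that $\bar X_{k:l\setminus 0}$ has the same distribution as $X_{k:l\setminus 0}$; this uses the specific Santa Fe structure (the codeword length $\abs{f(X_0)}$ depends only on $K_0$, which is independent of the remaining coordinates), and it is what licenses replacing $I(\bar X_{-n+1:-1};\bar X_{1:n}\mid\cdot)$ by $I(X_{-n+1:-1};X_{1:n}\mid\cdot)$ at the cost of $O(H(X_0))$. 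In addition, on the lower-bound side the random shift forces conditioning on an event such as $\bar B=\okra{\bar L_0\le l}$, with $l\rightarrow\infty$ only at the very end; your ``bounded side information at the boundary'' gestures at this but does not supply the conditioning device or explain why its probability factor can be removed in the limit. So the skeleton of your argument matches the paper, but the rare-event control of the variable-length correspondence and the treatment of the stationary mean $\bar\nu$ are genuine missing pieces rather than routine details.
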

Proposition \ref{theoEnMixingCoded} follows from Proposition
\ref{theoEnMixing} by the conditional data processing inequality and
Chernoff bounds.  This proposition strengthens inequality
\begin{align}
  \label{limsupEBarn}
  \limsup_{m\rightarrow\infty} \frac{E_{\bar\nu}(m)}{m^\beta}>0
  ,
\end{align}
which follows for $p_k=0$ by \cite[Proposition 1.4]{Debowski10} and
\cite[Theorem 2]{Debowski11b}. 

The further organization of this paper is as follows.  The rate of
mutual information for the original and generalized Santa Fe processes
is discussed in Section \ref{secExcess}. The rate of mutual
information for the stationary coding is established in Section
\ref{secEncoding}.  Subsequently, the mixing property for the
generalized Santa Fe process is shown in Appendix \ref{secMixing}. As
an auxiliary result, we demonstrate that infinite direct products of
mixing
processes are also mixing. 

\section{The rate of mutual information}
\label{secExcess}

In this section we evaluate the rate of block mutual information for
the Santa Fe process and its mixing counterpart. The main tool is
conditional mutual information for stochastic processes as discussed,
e.g., in \cite{Pinsker60en,Debowski09}.


Here are some facts about conditional information that will be used,
cf., \cite{Debowski09}:
\begin{enumerate}
\item[(a)] continuity
  $I(X;(Y_k)_{k\in\mathbb{N}}|Z)=\lim_{n\rightarrow\infty}
  I(X;(Y_k)_{k=1}^n|Z)$,
\item[(b)] chain rule $I(X;Y,Z|W)= I(X;Y|W) +I(X;Z|Y,W)$, and
\item[(c)] equality $I(X;Y|Z)=0$ for $X$ and $Y$ conditionally
  independent given $Z$.
\end{enumerate}
Two simple corollaries of the chain rule will be used as well:
\begin{enumerate}
\item $I(X;Y)= I(X;Y;Z)+ I(X;Y|Z)$ for $H(X),H(Y)<\infty$,
  where we define triple information
\begin{align*}
  I(X;Y;Z) &
  :=I(X;Z)+I(Y;Z)
  -I((X,Y);Z)
  ,
\end{align*}
\item $I(X;Z|Y)=I(X;Z)$ for $X$ and $Y$ independent and
  conditionally independent given $Z$.
\end{enumerate}
The second identity follows from $I(X;(Y,Z))= I(X;Y)
+I(X;Z|Y)=I(X;Z) +I(X;Y|Z)$ where both $I(X;Y)=0$ and
$I(X;Y|Z)=0$.

Now we can evaluate block mutual information $E_\mu(n)$ for the Santa
Fe processes. The case of the original Santa Fe process is simpler and
will be considered separately to guide the reader through the more
complicated proof for the generalized process.

\begin{proof*}{Proposition \ref{theoEnUDP}}
  Notice that variables $Z_k$, $k\in\mathbb{N}$, are independent and
  conditionally independent given any finite block $X_{n:m}$. Hence
\begin{align*}
  I\okra{X_{1:n};(Z_k)_{k\in\mathbb{N}}}
  &= \sum_{k=1}^\infty I(X_{1:n};Z_k|Z_{1:k-1})
  = \sum_{k=1}^\infty I(X_{1:n};Z_k)  
  .
\end{align*}
Also $X_{1:n}$ and $X_{n+1:2n}$ are conditionally independent given
$(Z_{k})_{k\in\mathbb{N}}$. Hence
$I\okra{X_{1:n};X_{n+1:2n}|(Z_{k})_{k\in\mathbb{N}}}=0$.  Both results
yield
\begin{align}
  E_\mu(n)&=I\okra{X_{1:n};X_{n+1:2n}}
  \nonumber
  \\
  &= I\okra{X_{1:n};X_{n+1:2n};(Z_k)_{k\in\mathbb{N}}} +   
  I\okra{X_{1:n};X_{n+1:2n}|(Z_k)_{k\in\mathbb{N}}}
  \nonumber
  \\
  &=I\okra{X_{1:n};X_{n+1:2n};(Z_k)_{k\in\mathbb{N}}}
  \nonumber
  \\
  &=2I\okra{X_{1:n};(Z_k)_{k\in\mathbb{N}}}
  -I\okra{X_{1:2n};(Z_k)_{k\in\mathbb{N}}}
  \nonumber
  \\
  &=
  \sum_{k=1}^\infty [2I(X_{1:n};Z_k)-I(X_{1:2n};Z_k)]
  \nonumber
  \\
  &=
  \sum_{k=1}^\infty I(X_{1:n};X_{n+1:2n};Z_k)
  \label{EnTriple}
  .
\end{align}

Computing simple expressions
  \begin{align*}
  H(Z_k|X_{1:n})
  &
  = 1 \cdot P(\text{$K_i\not=k$ for all $i\in\klam{1,...,n}$})
  \\
  &
  \phantom{=} + 0 \cdot P(\text{$K_i=k$ for some
    $i\in\klam{1,...,n}$})
  ,
  \\
  I(X_{1:n};Z_k)
  &
  = P(\text{$K_i=k$ for some
    $i\in\klam{1,...,n}$})
  \\
  &
  =(1-[1-P(K_i=k)]^n)
  ,
\end{align*}
we obtain  the triple mutual information
\begin{align*}
  I(X_{1:n};X_{n+1:2n};Z_k)
  &
  =
  (1-[1-P(K_i=k)]^n)^2
\end{align*}
and the block mutual information
\begin{align}
  \label{EnUDP}
  E_\mu(n)=\sum_{k=1}^\infty
  \okra{1-\okra{1-\frac{A}{k^{1/\beta}}}^n}^2
  ,
\end{align}
where $A:=1/\zeta(\beta^{-1})$.

The right-hand side of (\ref{EnUDP}) equals up to an additive
constant $\le 1$ to the integral
\begin{align*}
\int_1^\infty \okra{1-\okra{1-\frac{A}{k^{1/\beta}}}^n}^2 dk
=\beta (An)^\beta \int_{(1-A)^n}^1 f_n(u) du
,
\end{align*}
where we use substitution 
\begin{align}
  \label{u}
  u:=\okra{1-Ak^{-1/\beta}}^n
\end{align}
and functions
\begin{align}
  \label{fnu}
  f_n(u):=\frac{(1-u)^2}{u^{1-1/n}[n(1-u^{1/n})]^{\beta+1}}
  .
\end{align}

We have the limit
\begin{align*}
  \lim_{n\rightarrow\infty} f_n(u) =
  f(u):=\frac{(1-u)^2}{u(-\ln u)^{\beta+1}}
\end{align*}
with the upper bound
\begin{align*}
  \frac{f_n(u)}{f(u)}&\le\sup_{0< x< 1}
  \frac{x(-\ln x)^{\beta+1}}{(1-x)^{\beta+1}}= 1
  ,
  &
  u,\beta&\in(0,1)
  .
\end{align*}
Moreover, function $f(u)$ is integrable on $u\in(0,1)$.  Hence
\begin{align*}
  \lim_{n\rightarrow\infty} \frac{E_\mu(n)}{n^\beta}=
  \beta A^\beta
  \int_0^1 f(u) du
\end{align*}
follows by the dominated convergence theorem.  

It remains to compute $\int f(u) du$.  Putting $t:=-\ln u$ yields
\begin{align*}
  \int_0^1  f(u) du
  &
  =
  \int_0^\infty (1-e^{-t})^2\,t^{-\beta-1} dt
  \\
  &
  =
  \int_0^\infty [e^{-2t}-1-2(e^{-t}-1)]\,t^{-\beta-1} dt
  \\
  &
  =(2-2^\beta)\beta^{-1}\Gamma(1-\beta)
  ,
\end{align*}
where integral 
\begin{align*}
  &\int_0^\infty (e^{-kt} -1)t^{-\beta-1}dt
  =(e^{-kt}-1)(-\beta^{-1})t^{-\beta}|_0^\infty 
  \\
  &\qquad
  -
  \int_0^\infty (-ke^{-kt})(-\beta^{-1})t^{-\beta}dt
  =-k^\beta\beta^{-1}\Gamma(1-\beta)
\end{align*}
can be integrated by parts for the considered $\beta$.
\end{proof*}

Next, we prove the more general statement, partly using the preceding
proof.

\begin{proof*}{Proposition \ref{theoEnMixing}}
  Observe that processes $\tilde Z_k:=(Z_{ik})_{i\in\mathbb{Z}}$,
  where $k\in\mathbb{N}$, are independent and conditionally
  independent given any finite block $X_{n:m}$. Also $X_{1:n}$ and
  $X_{n+1:2n}$ are conditionally independent given
  $(\tilde Z_{k})_{k\in\mathbb{N}}$. Thus we obtain
  \begin{align*}
    E_\mu(n)&=
    \sum_{k=1}^\infty I(X_{1:n};X_{n+1:2n};\tilde Z_k)
  \end{align*}
  by replacing $Z_k$ with $\tilde Z_k$ in derivation (\ref{EnTriple})
  from the previous proof.

  By the assumed Markov property, process $\tilde
  Z_k=(Z_{ik})_{i\in\mathbb{Z}}$ is independent from $X_{1:n}$ given
  $(Z_{ik})_{1\le i\le n}$. This yields
  \begin{align*}
    I\okra{X_{1:n};X_{n+1:2n};\tilde Z_k}
    =
    2I\okra{X_{1:n};(Z_{ik})_{1\le i\le n}}
    -
    I\okra{X_{1:2n};(Z_{ik})_{1\le i\le 2n}}
    .
  \end{align*}
  The expressions on the right-hand side can be analyzed as
  \begin{align*}
    I\okra{X_{1:n};(Z_{ik})_{1\le i\le n}}
    =
    \sum_{i=1}^n I\okra{X_{i};Z_{ik}|X_{1:i-1}}
  \end{align*}
  because $(Z_{ik})_{1\le i\le n}$ is independent from $X_i$ given
  $Z_{ik}$ and $X_{1:i-1}$. Moreover,
  \begin{align*}
    I\okra{X_{i};Z_{ik}|X_{1:i-1}}
    &=
    H\okra{Z_{ik}|X_{1:i-1}}-H\okra{Z_{ik}|X_{1:i}}
    .
  \end{align*}

  To evaluate the conditional entropies, put
  $a_{nk}:=\eta(P(Z_{ik}=z|Z_{i-n,k}=z))$ and $b_k:=P(K_i=k)$.  Notice
  that by the Markovity of $(Z_{ik})_{i\in\mathbb{Z}}$ we have
  \begin{align*}
    H\okra{Z_{ik}|X_{1:i-1}} 
    =&\sum_{n=1}^{i-1} a_{nk} P(K_j\neq k\text{
      for } i-n< j\le i-1)P(K_{i-n}=k)
    \nonumber
    \\
    &+\eta(P(Z_{ik}=z)) P(K_j\neq k\text{ for } 1\le j\le i-1)
    \nonumber
    \\
    =&\sum_{n=1}^{i-1} a_{nk} b_k(1-b_k)^{n-1} + (1-b_k)^{i-1} 
    .
  \end{align*}
  Similarly, since $a_{0k}=0$, we obtain
  \begin{align*}
    H\okra{Z_{ik}|X_{1:i}} 
    =&\sum_{n=0}^{i-1} a_{nk} P(K_j\neq k\text{
      for } i-n< j\le i)P(K_{i-n}=k)
    \nonumber
    \\
    &+\eta(P(Z_{ik}=z)) P(K_j\neq k\text{ for } 1\le j\le i)
    \nonumber
    \\
    =&\sum_{n=1}^{i-1} a_{nk} b_k(1-b_k)^{n}+ (1-b_k)^{i}
    .
  \end{align*}

Thus we may reconstruct
\begin{align*}
  I\okra{X_{i};Z_{ik}|X_{1:i-1}}
  =&\sum_{n=1}^{i-1} a_{nk} b_k^2(1-b_k)^{n-1}
  + \kwad{(1-b_k)^{i-1}-(1-b_k)^{i}}
  ,
  \\
  I\okra{X_{1:n};(Z_{ik})_{1\le i\le n}}
  =&\sum_{m=1}^{n-1} (n-m) a_{mk} b_k^2(1-b_k)^{m-1}
  + \kwad{1-(1-b_k)^{n}}
  ,
\end{align*}
and
\begin{align*}
  I\okra{X_{1:n};X_{n+1:2n};\tilde Z_k}
  =&-\sum_{m=1}^{n-1} m a_{mk} b_k^2(1-b_k)^{m-1}
  \\
  &-\sum_{m=n}^{2n-1} (2n-m) a_{mk} b_k^2(1-b_k)^{m-1}
  +\kwad{1-(1-b_k)^{n}}^2
  .
\end{align*}

For a~fixed $b_k$, we see that $I\okra{X_{1:n};X_{n+1:2n};\tilde
  Z_k}$ is minimized for $a_{mk}=1$.  This case arises when $p_k=1/2$
and $(Z_{ik})_{i\in\mathbb{Z}}$ are IID. A~direct evaluation yields
then $H\okra{Z_{ik}|X_{1:i-1}}=1$,
$H\okra{Z_{ik}|X_{1:i}}=(1-b_k)$, $I\okra{X_{1:n};(Z_{ik})_{1\le
    i\le n}}=nb_k$, and $I\okra{X_{1:n};X_{n+1:2n};\tilde
  Z_k}=0$. In this way we have proved that
\begin{align}
  \label{BkEquality}
  \sum_{m=1}^{n-1} m b_k^2(1-b_k)^{m-1}
  +
  \sum_{m=n}^{2n-1} (2n-m) b_k^2(1-b_k)^{m-1}
  =
  \kwad{1-(1-b_k)^{n}}^2
  .
\end{align}
On the other hand, $I\okra{X_{1:n};X_{n+1:2n};\tilde Z_k}$ is maximized for
$a_{mk}=0$. This holds if $p_k=0$ or $p_k=1$. For $p_k=0$, the process
$(X_{i})_{i\in\mathbb{Z}}$ collapses to (\ref{exUDP}).

By equality (\ref{BkEquality}), we obtain
\begin{align*}
 I\okra{X_{1:n};X_{n+1:2n};\tilde Z_k}\in
 \kwad{
   (1 - \epsilon)\kwad{1-(1-b_k)^{n}}^2
   , 
   \kwad{1-(1-b_k)^{n}}^2
 }
\end{align*}
if $a_{mk}\le \epsilon$ for $m\le 2n-1$. To bound coefficients
$a_{mk}$, observe
\begin{align*}
  P(Z_{ik}=z|Z_{i-n,k}=z)\ge (1-p_k)^n
  .
\end{align*}
Hence $a_{mk}\le\eta(\delta)$ for $m\le 2n-1$ if
$(1-p_k)^{2n}\ge\delta\ge 1/2$. Thus we obtain
\begin{align}
  \label{EnBoundsGeneral}
  E_\mu(n)
  \in
  \kwad{
    (1-\eta(\delta))
    \sum_{k\in\mathbb{N}:(1-p_k)^{2n}\ge\delta} \kwad{1-(1-b_k)^{n}}^2
    ,    
    \sum_{k\in\mathbb{N}} \kwad{1-(1-b_k)^{n}}^2
  }
  .
\end{align}
The most tedious part of the proof is completed.

The limiting behavior of the upper bound in (\ref{EnBoundsGeneral})
has been analyzed in the proof of Proposition \ref{theoEnUDP}, and by
that reasoning (\ref{limEnMixing}) holds. Now we will consider the
limit of the lower bound in (\ref{EnBoundsGeneral}).  As in the
previous proof, we will approximate the respective sum with an
integral.  Recall that $b_k=Ak^{-1/\beta}$ with
$A=1/\zeta(\beta^{-1})$.  Let us define $b_k$ for real $k$ in the same
way.
\begin{enumerate}
\item For $p_k\le b_k$: notice that $(1-b_k)^{2n}\ge\delta$ implies
  $(1-p_k)^{2n}\ge\delta$. Thus 
  $E_\mu(n)/(1 - \eta(\delta))+1$ is greater than
  \begin{align*}
    \int_{(1-b_k)^{n}\ge\sqrt{\delta}}^\infty 
    \okra{1-\okra{1-b_k}^n}^2 dk
    =
    \beta (An)^\beta \int_{\sqrt{\delta}}^1 f_n(u) du
    ,
  \end{align*}
  where we use substitution (\ref{u}) and functions (\ref{fnu}). This
  yields (\ref{limEnMixingPower}) by the dominated convergence theorem.
\item For $\lim_{k} p_k/b_k=0$: let $k(n)$ be the largest number $k$
  such that $(1-p_k)^{2n}<\delta$ or put $k(n)=1$ if there is no such
  number. Then $E_\mu(n)/(1 - \eta(\delta))+1$ is greater than
  \begin{align*}
    \int_{k(n)}^\infty 
    \okra{1-\okra{1-b_k}^n}^2 dk
    =
    \beta (An)^\beta \int_{u(n)}^1 f_n(u) du
    ,
  \end{align*}
  where $u(n):=\okra{1-b_{k(n)}}^n$. We have $\lim_{n} u(n)=0$ if
  $\lim_{n} k(n)<\infty$. On the other hand, if $\lim_{n} k(n)=\infty$
  then we use $\liminf_{n} np_{k(n)}>-\ln\sqrt{\delta}$ and $\lim_{k}
  p_k/b_k=0$ to infer $\lim_{n} nb_{k(n)}=\infty$ and hence $\lim_{n}
  u(n)=0$. Thus the dominated convergence theorem in both cases yields
  \begin{align*}
    \liminf_{n\rightarrow\infty} \frac{E_\mu(n)}{n^\beta}
    \ge (1-\eta(\delta)) \beta A^\beta \int_0^1 f(u)du
    .
  \end{align*}
  Taking  $\delta\rightarrow 1$ gives (\ref{limEnUDP}). 
\end{enumerate}
\end{proof*}

\section{Encoding into a~finite alphabet}
\label{secEncoding}

In this section we study the rate of mutual information for the
stationary coding of the generalized Santa Fe process.  Let $\abs{w}$
be the length of string $w$ and let $(X_i)_{i\in\mathbb{Z}}$ denote
the generalized Santa Fe process.  For the coding function
(\ref{ConjCode}), regardless of the value of $p_k$, the expansion rate
\begin{align}
  \lim_{n\rightarrow\infty} \frac{1}{n}\sum_{i=1}^{n}\abs{f(X_i)}
\end{align}
is almost surely constant and equals the expansion rate
$L:=\sred{\abs{f(X_i)}}$. Hence the stationary coding $(\bar
Y_i)_{i\in\mathbb{Z}}$ can be constructed as detailed below. This
construction was formally introduced in \cite[Section 6]{Debowski10}
and justified by \cite[Proposition 2.3]{Debowski10}.

Suppose that probability space $(\Omega,\mathfrak{J},P)$ is
sufficiently rich to support some previously unmentioned random
variable $N:\Omega\rightarrow\mathbb{N}\cup\klam{0}$, called a~random
shift, and a~nonstationary process $(\bar X_i)_{i\in\mathbb{Z}}$ where
$\bar X_i:\Omega\rightarrow\mathbb{X}$. We assume that $N$ and $(\bar
X_i)_{i\in\mathbb{Z}}$ are conditionally independent given $\bar X_0$
and their distribution is
\begin{align}
  \label{DefBarX}
  P(\bar X_{k:l}=x_{k:l})
  &= 
  P(X_{k:l}=x_{k:l})\cdot \frac{\abs{f(x_0)}}{L} 
  ,
  &
  k&\le 0\le l,
  \\
  \label{DefN}
  P(N=n|\bar X_{0}=x_{0})
  &=
  \frac{
    \boole{0\le n\le \abs{f(x_0)}-1}
  }{
    \abs{f(x_0)}
  }
  ,
  &
  n&\in\mathbb{N}\cup\klam{0}
  .
\end{align}
Process $(\bar Y_i)_{i\in\mathbb{Z}}$ with the desired distribution
$\bar\nu=P((\bar Y_i)_{i\in\mathbb{Z}}\in\cdot)$, where
$\nu=P((Y_i)_{i\in\mathbb{Z}}\in\cdot)$ for
$(Y_i)_{i\in\mathbb{Z}}=f^{\mathbb{Z}}((X_i)_{i\in\mathbb{Z}})$,
can be obtained as
\begin{align}
  \label{DefBarY}
  (\bar Y_i)_{i\in\mathbb{Z}}=T^{-N} f^{\mathbb{Z}}((\bar X_i)_{i\in\mathbb{Z}})
  ,
\end{align}
where $T((y_i)_{i\in\mathbb{Z}}):=(y_{i+1})_{i\in\mathbb{Z}}$ is the
shift operation.  

\begin{lemma}
  \label{theoBarXX}
  Denote blocks $\bar X_{k:l}$ with $\bar X_0$ removed as $\bar
  X_{k:l\setminus 0}$. For the Santa Fe processes variables $\bar
  X_{k:l\setminus 0}$ and $X_{k:l\setminus 0}$ have the same
  distribution.
\end{lemma}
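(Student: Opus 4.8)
The plan is to work directly from the defining distribution (\ref{DefBarX}) of the process $(\bar X_i)_{i\in\mathbb{Z}}$ and to sum out the coordinate $\bar X_0$. Fix $k\le 0\le l$ and a string $x_{k:l\setminus 0}$ over $\mathbb{X}$. By (\ref{DefBarX}), for any completion $x_0\in\mathbb{X}$ we have $P(\bar X_{k:l}=x_{k:l}) = P(X_{k:l}=x_{k:l})\cdot \abs{f(x_0)}/L$, so that
\begin{align*}
  P(\bar X_{k:l\setminus 0}=x_{k:l\setminus 0})
  &= \sum_{x_0\in\mathbb{X}} P(X_{k:l}=x_{k:l})\cdot\frac{\abs{f(x_0)}}{L}
  = \frac{1}{L}\sum_{x_0\in\mathbb{X}} \abs{f(x_0)}\, P(X_{k:l}=x_{k:l}).
\end{align*}
The goal is to show this equals $P(X_{k:l\setminus 0}=x_{k:l\setminus 0})$, i.e.\ that $\sum_{x_0} \abs{f(x_0)}\,P(X_{k:l}=x_{k:l}) = L\cdot P(X_{k:l\setminus 0}=x_{k:l\setminus 0})$.

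The key observation is that, for the Santa Fe processes (original or generalized), the coordinate $X_0=(K_0,Z_{0,K_0})$ is constructed so that $\abs{f(X_0)}=\abs{b(K_0)}+2$ depends only on $K_0$, and $K_0$ is independent of the remaining coordinates $X_{k:l\setminus 0}$ — indeed in (\ref{exMixing}) the $K_i$ are i.i.d.\ and the chains $(Z_{ik})_i$ are independent of $(K_i)_i$, so the pair $(K_0,Z_{0,K_0})$ is conditionally independent of $X_{k:l\setminus 0}$ given $K_0$, and moreover $X_0$ is independent of $X_{k:l\setminus 0}$ after one additionally sums over $Z_{0,K_0}$. Concretely I would factor $P(X_{k:l}=x_{k:l})$ using the independence of $(K_i)_i$ from the $Z$-chains, sum over the free value of $Z_{0,K_0}$ (which only appears in the $x_0$ coordinate here since $K_0$ is fixed by $x_0$), and thereby write $\sum_{x_0}\abs{f(x_0)}P(X_{k:l}=x_{k:l})$ as $\bigl(\sum_{k'\in\mathbb{N}} \abs{f(k',\cdot)}\,P(K_0=k')\bigr)\cdot P(X_{k:l\setminus 0}=x_{k:l\setminus 0})$. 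The first factor is precisely $\sred\abs{f(X_0)}=L$, and we are done.

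The one point requiring care — the main (mild) obstacle — is the conditional-independence bookkeeping: in the generalized process the chain $(Z_{i,K_0})_i$ that supplies the value $Z_{0,K_0}$ in coordinate $x_0$ can also appear among the coordinates in $x_{k:l\setminus 0}$ whenever $K_i=K_0$ for some $i\ne 0$ in $\{k,\dots,l\}$. One must check that summing over the value $Z_{0,K_0}$ (equivalently, over the second component of $x_0$ with $K_0$ held fixed) genuinely marginalizes it out of the whole block without leaving residual dependence on $x_{k:l\setminus 0}$. This follows because conditioning on the event $K_0=k'$ and on $(K_i)_{i\ne 0}$, the value $Z_{0,k'}$ is one coordinate of the Markov chain $(Z_{i,k'})_i$, and summing a joint Markov-chain distribution over one interior or boundary coordinate returns the marginal of the rest; since $\abs{f(x_0)}$ does not depend on that coordinate, it pulls out of the sum. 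I would present this by conditioning on the full $K$-sequence $(k_i)_{i\in\{k,\dots,l\}}$, noting the claim is trivial when no $k_i$ with $i\ne 0$ equals $k_0$, and in the remaining case invoking the marginalization property of the Markov chain $(Z_{i,k_0})_i$. Averaging back over $(k_i)_{i\ne 0}$ (independent of $k_0$) then gives the factorization above.
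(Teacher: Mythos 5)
Your proof is correct and follows essentially the same route as the paper's: marginalize $\bar X_0$ via (\ref{DefBarX}), use that $\abs{f(x_0)}$ depends only on the $K$-component and that $K_0$ is independent of $X_{k:l\setminus 0}$, and recognize the remaining sum as $\sred\abs{f(X_0)}=L$. (Your passing claim that $(K_0,Z_{0,K_0})$ is conditionally independent of $X_{k:l\setminus 0}$ given $K_0$ is not true for the generalized process, but you never actually use it: the marginalization over $z_0$ at fixed $k_0$, which is what you ultimately invoke, is exactly the step $\sum_{z_0}P(Z_{0,k_0}=z_0|X_{k:l\setminus 0})=1$ in the paper's computation.)
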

\begin{proof}
Notice that $\abs{f(X_0)}$ does not depend on $Z_{0,K_0}$ and $K_0$ is
independent of $X_{k:l\setminus 0}$. Hence
\begin{align}
  P(\bar X_{k:l\setminus 0})
  &=
  P(X_{k:l\setminus 0})
  \sum_{x_0} \frac{\abs{f(x_0)}}{L} 
  P(X_0=x_0|X_{k:l\setminus 0})    
  \nonumber\\
  &=
  P(X_{k:l\setminus 0})
  \sum_{k_0,z_0} \frac{\abs{f(k_0,1)}}{L}
  P(K_0=k_0)
  P(Z_{0,k_0}=z_0|X_{k:l\setminus 0})
  \nonumber\\
  \label{BarXX}
  &=
  P(X_{k:l\setminus 0})
  .
\end{align}  
\end{proof}

In the following we write $L_i:=\abs{f(X_i)}$ and $\bar
L_i:=\abs{f(\bar X_i)}$. Variables $L_i$ are independent and
identically distributed. For these variables we define indices
\begin{align}
  L^+_t&:=\frac{1}{t}\log \sred{2^{tL_i}}
  ,
  \\
  L^-_t&:=-\frac{1}{t}\log \sred{2^{-tL_i}}
  ,
\end{align}
where $t>0$.  For the given distribution of $L_i$, we have
$0<L^-_t,L^+_t<\infty$ for sufficiently small $t$. By the Jensen
inequality $L^+_t$ is a~growing function of $t$ and $L^-_t$ is
a~decreasing function of $t$. Jensen inequality implies also
$L^-_t\le L\le L^+_t$.
\begin{lemma}
  We have
  \begin{align}
    \label{limL}
    \lim_{t\rightarrow 0}L^+_t=\lim_{t\rightarrow 0}L^-_t=L
    .
  \end{align}
\end{lemma}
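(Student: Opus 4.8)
The plan is to sandwich $L^+_t$ and $L^-_t$ between $L$ and quantities that tend to $L$ as $t\to 0^+$, using the Jensen bounds $L^-_t\le L\le L^+_t$ already recorded together with two elementary scalar inequalities for the exponential; by those bounds it suffices to show $\limsup_{t\to 0}L^+_t\le L$ and $\liminf_{t\to 0}L^-_t\ge L$. For the upper index I would write $2^{tL_i}=e^{(t\ln 2)L_i}$, apply $e^x\le 1+x e^x$ (valid for $x\ge 0$) with $x=(t\ln 2)L_i$, take expectations, and use $\ln(1+y)\le y$ to obtain
\begin{align*}
  L^+_t=\frac{\ln\sred\okra{2^{tL_i}}}{t\ln 2}\le\sred\okra{L_i\,2^{tL_i}}
  .
\end{align*}
As $t\downarrow 0$ the integrand $L_i\,2^{tL_i}$ decreases pointwise to $L_i$ and is dominated by $L_i\,2^{t_0L_i}$, which is integrable because $L^+_{t_0}<\infty$ for some $t_0>0$ (as noted just before the lemma) and $\sred L_i=L<\infty$; hence $\sred(L_i\,2^{tL_i})\to L$ by dominated convergence, so $\limsup_{t\to 0}L^+_t\le L$.

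For the lower index I would apply $e^{-x}\le 1-x e^{-x}$ (equivalent to $1+x\le e^x$, valid for $x\ge 0$) with $x=(t\ln 2)L_i$ to get $\sred\okra{2^{-tL_i}}\le 1-t\ln 2\cdot\sred\okra{L_i\,2^{-tL_i}}$; the right side is positive since $\sred(2^{-tL_i})>0$, so $\ln(1-y)\le -y$ yields
\begin{align*}
  L^-_t=-\frac{\ln\sred\okra{2^{-tL_i}}}{t\ln 2}\ge\sred\okra{L_i\,2^{-tL_i}}
  .
\end{align*}
Here $L_i\,2^{-tL_i}$ increases to $L_i$ as $t\downarrow 0$ and is bounded by the integrable variable $L_i$, so $\sred(L_i\,2^{-tL_i})\to L$ by monotone convergence, giving $\liminf_{t\to 0}L^-_t\ge L$. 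Combined with the Jensen bounds, this proves (\ref{limL}).

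The only slightly delicate point will be the interchange of limit and expectation on the upper side, i.e.\ the integrability of $L_i\,2^{t_0L_i}$ for some $t_0>0$; but this is precisely what the already-established finiteness of $L^+_{t_0}$ supplies (with $\sred L_i<\infty$ handling the extra linear factor), so it is not really an obstacle. An equivalent short route would be to observe that $t\mapsto\ln\sred\okra{2^{tL_i}}$ is differentiable at $t=0$ with derivative $\ln 2\cdot\sred L_i$, differentiation under the expectation being justified by the same domination, whence $L^+_t\to L$ directly from the definition of the derivative, and symmetrically $L^-_t\to L$.
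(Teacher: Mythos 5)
Your argument is correct and reaches (\ref{limL}) by a route that is related to, but not the same as, the paper's. You bound the indices by tilted means, $L^+_t\le\sred\okra{L_i2^{tL_i}}$ and $L^-_t\ge\sred\okra{L_i2^{-tL_i}}$, using the scalar inequalities $x\le e^x-1\le xe^x$, and then let $t\downarrow 0$ by dominated (respectively monotone) convergence; in effect you show that the cumulant generating function of $L_i$ is differentiable at $0$ with derivative $L$. The paper instead expands the moment generating function to second order, $\sred\okra{2^{tL_i}}=1+t\sred{L_i}+t^2\sred{g(t,L_i)}$ with $g(t,x)=t^{-2}(2^{tx}-1-tx)$ increasing in $t$, bounds the remainder by its value at a fixed $t_0$, and takes $\tfrac1t\log$ of the resulting elementary expression, using $\sred\okra{2^{-tL_i}}\le 1-t\sred{L_i}+\tfrac{t^2}{2}\sred{L_i^2}$ on the lower side. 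Both proofs hinge on the Jensen sandwich $L^-_t\le L\le L^+_t$. Your version buys a cleaner treatment of the base-$2$ constants and does not need the second moment explicitly on the lower side; the paper's version avoids convergence theorems by exploiting monotonicity of the remainder in $t$.

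One justification should be tightened: the integrability of the dominating function $L_i2^{t_0L_i}$ does not follow from $\sred\okra{2^{t_0L_i}}<\infty$ together with $\sred{L_i}<\infty$ alone --- the extra linear factor is not ``handled'' by $\sred{L_i}<\infty$, since a product of two integrable factors need not be integrable. The repair is immediate in this setting: because $\sred\okra{2^{tL_i}}<\infty$ for \emph{all} sufficiently small $t$ (as noted before the lemma), choose $t_1>t_0$ with $\sred\okra{2^{t_1L_i}}<\infty$ and dominate $L_i2^{t_0L_i}\le C\,2^{t_1L_i}$ with $C=\sup_{x\ge0}x\,2^{-(t_1-t_0)x}<\infty$ (alternatively, use the explicit geometric tail of $P(L_i=l)$ computed later in the paper). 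With that adjustment your dominated convergence step, and hence the whole proof, is sound.
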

\begin{proof}
  Consider function $g(t,x)=t^{-2}(2^{tx}-1-tx)$. For $x>0$, it is
  a~growing function of $t$. Consider next such a~$t_0$ that
  $\sred{2^{t_0L_i}}<\infty$.  For $0<t\le t_0$, we obtain
  \begin{align*}
   \sred{2^{tL_i}}
   = 1+t\sred{L_i}+t^2\sred{g(t,L_i)}
   \le 1+t\sred{L_i}+t^2\sred{g(t_0,L_i)}
   .
  \end{align*}
  This yields
  \begin{align*}
    L\le L^+_t\le \frac{1}{t}\log
    \okra{1+t\sred{L_i}+t^2\sred{g(t_0,L_i)}}
    \xrightarrow{t\rightarrow 0} L
    .
  \end{align*}
  On the other hand, for $t>0$, we have
  \begin{align*}
   \sred{2^{-tL_i}}
   \le 1-t\sred{L_i}+\frac{t^2\sred{L_i^2}}{2}
  \end{align*}
  Hence
  \begin{align*}
    L\ge L^-_t\ge -\frac{1}{t}\log
    \okra{1-t\sred{L_i}+\frac{t^2\sred{L_i^2}}{2}}
    \xrightarrow{t\rightarrow 0} L
    .
  \end{align*} 
\end{proof}

Define events
  \begin{align}
    S^+_n
    &:=    
    \okra{\sum_{i=1}^{n} L_i < n(L^+_t+\epsilon)}
    ,
    \\
    S^-_n
    &:=    
    \okra{\sum_{i=1}^{n} L_i > n(L^-_t-\epsilon)}
    ,
    \\
    T^+_n
    &:=
    \okra{\sum_{i=-n+1}^{-1} L_i < (n-1)(L^+_t+\epsilon)}
    ,
    \\
    T^-_n
    &:=
    \okra{\sum_{i=-n+1}^{-1} L_i > (n-1)(L^-_t-\epsilon)}
    .    
  \end{align}
Subsequently, we will use the Chernoff bounds:
\begin{lemma}
For $t>0$ and $\epsilon>0$,
\begin{align}
  \label{PSPlus}
  P\okra{{S^+_n}^c}
  &\le \frac{1}{2^{nt\epsilon}}
  ,
  \\
  \label{PSMinus}
  P\okra{{S^-_n}^c}
  &\le \frac{1}{2^{nt\epsilon}}
  ,
  \\
  \label{PTPlus}
  P\okra{{T^+_n}^c}
  &\le \frac{1}{2^{(n-1)t\epsilon}}
  ,
  \\
  \label{PTMinus}
  P\okra{{T^-_n}^c}
  &\le \frac{1}{2^{(n-1)t\epsilon}}
  .
\end{align}
\end{lemma}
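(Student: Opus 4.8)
The plan is to prove the four Chernoff bounds, which are essentially the standard exponential moment inequality applied to sums of IID variables $L_i$. I would treat $P(({S^+_n})^c)$ in full detail and then indicate that the remaining three are obtained by symmetric manipulations (replacing $t$ by $-t$, or re-indexing the sum over $i\in\{-n+1,\dots,-1\}$, which has $n-1$ terms).

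For the first bound, recall that $({S^+_n})^c=\{\sum_{i=1}^n L_i\ge n(L^+_t+\epsilon)\}$. Fix $t>0$ (small enough that $\sred 2^{tL_i}<\infty$, as guaranteed by the discussion preceding the lemma). Since $x\mapsto 2^{tx}$ is increasing, Markov's inequality gives
\begin{align*}
  P\okra{({S^+_n})^c}
  &= P\okra{2^{t\sum_{i=1}^n L_i}\ge 2^{tn(L^+_t+\epsilon)}}
  \le \frac{\sred 2^{t\sum_{i=1}^n L_i}}{2^{tn(L^+_t+\epsilon)}}
  .
\end{align*}
Because the $L_i$ are independent and identically distributed, the numerator factorizes as $\sred 2^{t\sum_{i=1}^n L_i}=\okra{\sred 2^{tL_i}}^n=2^{ntL^+_t}$, using the very definition $L^+_t=\frac 1t\log\sred 2^{tL_i}$. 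Substituting, the $2^{ntL^+_t}$ factors cancel and we are left with $P(({S^+_n})^c)\le 2^{-nt\epsilon}$, which is (\ref{PSPlus}).

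For (\ref{PSMinus}) the same argument applies to $-L_i$: from $({S^-_n})^c=\{\sum_{i=1}^n L_i\le n(L^-_t-\epsilon)\}=\{-\sum_{i=1}^n L_i\ge -n(L^-_t-\epsilon)\}$, Markov's inequality applied to $2^{-t\sum L_i}$ together with $\sred 2^{-t\sum_{i=1}^n L_i}=\okra{\sred 2^{-tL_i}}^n=2^{-ntL^-_t}$ (by the definition $L^-_t=-\frac 1t\log\sred 2^{-tL_i}$) gives $P(({S^-_n})^c)\le 2^{-ntL^-_t}/2^{-nt(L^-_t-\epsilon)}=2^{-nt\epsilon}$. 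The bounds (\ref{PTPlus}) and (\ref{PTMinus}) are identical except that the sum now ranges over the $n-1$ indices $i\in\{-n+1,\dots,-1\}$, so every occurrence of $n$ in the exponent becomes $n-1$; stationarity (indeed, plain identical distribution) of the $L_i$ means the moment generating function is unchanged. There is no real obstacle here — the only thing to be careful about is that $L^+_t$ and $L^-_t$ are finite, i.e. that $t$ is chosen in the range where $\sred 2^{tL_i}$ and $\sred 2^{-tL_i}$ converge, which is exactly the hypothesis recorded just before the lemma (and $\sred 2^{-tL_i}\le 1<\infty$ trivially since $L_i\ge 1$).
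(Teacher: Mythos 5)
Your proof is correct and follows essentially the same route as the paper: Markov's inequality applied to $2^{\pm t\sum_i L_i}$, factorization of the expectation by the IID property, and cancellation via the definitions of $L^+_t$ and $L^-_t$, with the $T^{\pm}_n$ cases differing only in having $n-1$ summands. No gaps to report.
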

\begin{proof}
  Because variables $L_i$ are independent and identically distributed,
  using Markov inequality we observe
  \begin{align*}
    P\okra{{S^+_n}^c}
    &=
    P\okra{2^{t\sum_{i} L_i} \ge 2^{tn(L^+_t + \epsilon)}}
    \le
    \frac{\sred{2^{t\sum_{i} L_i}}}{2^{tn(L^+_t + \epsilon)}}
    \le
    \frac{1}{2^{nt\epsilon}}
    ,
    \\
    P\okra{{S^-_n}^c}
    &=
    P\okra{2^{-t\sum_{i} L_i} \ge 2^{-tn(L^-_t - \epsilon)}}
    \le
    \frac{\sred{2^{-t\sum_{i} L_i}}}{2^{-tn(L^-_t - \epsilon)}}
    \le
    \frac{1}{2^{nt\epsilon}}
    .
  \end{align*}
  Analogously we obtain the claims for ${T^+_n}^c$ and ${T^-_n}^c$.
\end{proof}

Next, for an event $E$, we introduce conditional entropy $H(X|E)$ and
mutual information $I(X;Y|E)$ which are respectively the entropy of
variable $X$ and mutual information between variables $X$ and $Y$
taken with respect to probability measure $P(\cdot|E)$.
\begin{lemma}
  For the generalized Santa Fe process, let $\gamma:=\beta/(1-\beta)$
  and $s<\min(t/2,\gamma/2)$. Then for sufficiently large $n$,
  \begin{align}
    \label{PSPlusHSPlus}
    P\okra{{S^+_n}^c}
    H\okra{L_1\,\middle |\, {S^+_n}^c}
    &\le
    \frac{1}{2^{ns\epsilon}}
    ,
    \\
    \label{PTPlusHTPlus}
    P\okra{{T^+_n}^c}
    H\okra{L_{-1}\,\middle |\, {T^+_n}^c}
    &\le
    \frac{1}{2^{(n-1)s\epsilon}}
    .
  \end{align}
\end{lemma}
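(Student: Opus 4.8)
The plan is to split the product $P({S^+_n}^c)\,H\okra{L_1\mid{S^+_n}^c}$ into the probability of the rare event ${S^+_n}^c$, already controlled by the Chernoff bound (\ref{PSPlus}), and the conditional entropy $H\okra{L_1\mid{S^+_n}^c}$, which I would bound linearly in the conditional \emph{mean} of $L_1$ and then estimate that mean by a one-sided exponential tilt. (When $P({S^+_n}^c)=0$ the claim is trivial, so assume it is positive.)

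First I would record the tail of the summands. For the coding function (\ref{ConjCode}) we have $L_i=\abs{f(X_i)}=\floor{\log K_i}+2$ since $\abs{b(k)}=\floor{\log k}$, so summing the power law (\ref{ZetaK}) over the dyadic block $2^{\ell-2}\le k<2^{\ell-1}$ (about $2^{\ell}$ terms, each of order $2^{-\ell/\beta}$) gives $P(L_i=\ell)\le C\,2^{-\ell/\gamma}$ for a constant $C$, where $\gamma=\beta/(1-\beta)$. Hence $\sred{2^{\sigma L_i}}<\infty$ and $\sred{L_i\,2^{\sigma L_i}}<\infty$ precisely for $\sigma<1/\gamma$. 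A short check, using that $t$ is small enough for $L^\pm_t$ to be finite (so $t<1/\gamma$), shows that the hypothesis $s<\min(t/2,\gamma/2)$ forces $s<\min(t,1/\gamma)$; I then fix an auxiliary exponent $\hat s$ with $s<\hat s<\min(t,1/\gamma)$.

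For the conditional entropy I would apply the Gibbs inequality against the geometric distribution $\rho_\ell:=(1-2^{-s})2^{-s\ell}$ on $\ell\ge 0$. Writing $q_\ell:=P(L_1=\ell\mid{S^+_n}^c)$, this yields
\begin{align*}
  H\okra{L_1\mid{S^+_n}^c}=-\sum_\ell q_\ell\log q_\ell\le-\sum_\ell q_\ell\log\rho_\ell=\abs{\log(1-2^{-s})}+s\,\sred{L_1\mid{S^+_n}^c},
\end{align*}
and multiplying by $P({S^+_n}^c)$ gives
\begin{align*}
  P({S^+_n}^c)\,H\okra{L_1\mid{S^+_n}^c}\le\abs{\log(1-2^{-s})}\,P({S^+_n}^c)+s\,\sred{L_1\,\boole{{S^+_n}^c}}.
\end{align*}
The first term is at most $\abs{\log(1-2^{-s})}\,2^{-nt\epsilon}$ by (\ref{PSPlus}). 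For the second I would use $\boole{{S^+_n}^c}\le 2^{\hat s\okra{\sum_{i=1}^n L_i-n(L^+_t+\epsilon)}}$ and independence of the $L_i$ to factor
\begin{align*}
  \sred{L_1\,\boole{{S^+_n}^c}}\le \sred{L_1\,2^{\hat s L_1}}\okra{\sred{2^{\hat s L_1}}}^{n-1}2^{-\hat s n(L^+_t+\epsilon)}=\sred{L_1\,2^{\hat s L_1}}\,2^{-\hat s n(L^+_t+\epsilon)+(n-1)\hat s L^+_{\hat s}};
\end{align*}
since $\hat s\le t$ the stated monotonicity of $L^+_\cdot$ gives $L^+_{\hat s}\le L^+_t$, and since $L^+_t\ge L>0$ the exponent is at most $-\hat s n\epsilon$, so $\sred{L_1\,\boole{{S^+_n}^c}}\le\sred{L_1\,2^{\hat s L_1}}\,2^{-\hat s n\epsilon}$. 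As $s<t$ and $s<\hat s$, both contributions are $o(2^{-ns\epsilon})$, so the sum is $\le 2^{-ns\epsilon}$ for all sufficiently large $n$, which is (\ref{PSPlusHSPlus}). Inequality (\ref{PTPlusHTPlus}) then follows by the same computation with the sum taken over the $n-1$ indices $-n+1,\dots,-1$ and $n$ replaced by $n-1$ throughout, invoking (\ref{PTPlus}).

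The main obstacle will be the second term: conditioning on the rare event tilts the law of $L_1$ toward larger values and could a priori blow up its entropy, so one must verify this blow-up is beaten by the decay of $P({S^+_n}^c)$. The Gibbs step reduces this to estimating $\sred{L_1\,\boole{{S^+_n}^c}}$, and the quantitative input that makes it work is $\sred{L_1\,2^{\hat s L_1}}<\infty$ — exactly where the tail exponent $1/\gamma$ of $L_1$, hence the $\gamma$ in the hypothesis, enters — together with keeping $\hat s$ strictly above $s$ (so $2^{-\hat s n\epsilon}=o(2^{-sn\epsilon})$) while no larger than $t$ (so $L^+_{\hat s}\le L^+_t$). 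Everything else is a routine estimate.
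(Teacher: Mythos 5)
Your proof is correct, but it takes a genuinely different route from the paper's. The paper bounds $P\okra{{S^+_n}^c}H\okra{L_1\,\middle|\,{S^+_n}^c}$ by expanding the conditional entropy as a sum over the values $l$ of $L_1$, pulling the factor $P\okra{{S^+_n}^c}$ inside via $P(C)\,i(x/P(C))\le i(x)$ with $i(p)=-p\log p$, and splitting the sum at $N\approx n\epsilon/2$: for $l<N$ it applies a second Chernoff bound to $\sum_{i=2}^n L_i$, and for $l\ge N$ it estimates the entropy tail $\sum_{l\ge N} i(P(L_1=l))$ from the exponential decay of $P(L_1=l)$; this is exactly where the two terms decaying like $2^{-(n-1)t\epsilon/2}$ and $2^{-(n-1)\gamma\epsilon/2}$, and hence the hypothesis $s<\min(t/2,\gamma/2)$, come from. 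You instead dominate the conditional entropy by a cross entropy against a geometric reference law, which reduces everything to $\sred{L_1\boole{{S^+_n}^c}}$, and you control that by a tilted-moment (Chernoff-with-prefactor) estimate; the tail of $L_1$ enters only through finiteness of $\sred{L_1 2^{\hat s L_1}}$ for some $\hat s\in(s,\min(t,1/\gamma))$. Your version is arguably cleaner: it avoids the somewhat delicate manipulations of $i(\cdot)$ over the split sum, and it uses only $s<t/2$ together with the standing assumption $L^+_t<\infty$ (equivalently $t<1/\gamma$), so it is insensitive to the precise tail exponent. On that point, your estimate $P(L_i=\ell)\le C\,2^{-\ell/\gamma}$ is the correct computation (since $1/\gamma=1/\beta-1$), whereas the paper's proof writes $P(L_i=l)\le\gamma 2^{-\gamma(l-2)}$, i.e.\ with exponent $\gamma$ in place of $1/\gamma$ -- an apparent slip that your argument sidesteps. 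Your handling of the trivial case $P\okra{{S^+_n}^c}=0$ and the verbatim transfer of the argument to ${T^+_n}^c$ with $n-1$ summands are both fine.
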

\begin{proof}
   We have
  \begin{align*}
    P(L_i=l)=\sum_{k=2^{l-1}}^{2^l-1}
    \frac{k^{-1/\beta}}{\zeta(\beta^{-1})}
    \le
    \int_{2^{l-2}}^{\infty}k^{-1/\beta}dk
    =\gamma 2^{-\gamma(l-2)}
    .
  \end{align*}
  Write $i(p):=-p\log p$. Then for sufficiently large $N$,
  \begin{align*}
    \sum_{l=N}^\infty &i(P(L_i=l))
    \\
    &
    \le
    \sum_{l=N}^\infty i(\gamma 2^{-\gamma(l-2)})
    \\
    &
    \le
    \sum_{l=N}^\infty \gamma 2^{2\gamma}2^{-\gamma l}
    \okra{-\log\gamma+\gamma l}
    \\
    &=
    \gamma 2^{2\gamma}
    \okra{
      \frac{2^{-\gamma N}}{1-2^{-\gamma}}\okra{-\log\gamma}
      +
      \frac{2^{-\gamma (N+1)}+N2^{-\gamma N}(1-2^{-\gamma})}{(1-2^{-\gamma})^2}
      \gamma
    }
    \\
    &\le
    A(\gamma) N2^{-\gamma N}
    .
  \end{align*}
  Let $M=n(L^+_t+\epsilon)$ and $N=\ceil{n\epsilon/2}-1$. Then,
  for sufficiently large $n$,
  \begin{align*}
    P\okra{{S^+_n}^c}
    &H\okra{L_1\,\middle |\, {S^+_n}^c}
    \\
    &=
    P\okra{{S^+_n}^c}
    \sum_{l=0}^\infty
    i\okra{
      \frac{
        P\okra{L_1=l,\sum_{i=2}^n L_i \ge M-l}
      }{
        P\okra{{S^+_n}^c}
      }
    }
    \\
    &=
    P\okra{{S^+_n}^c}
    \sum_{l=0}^\infty
    i\okra{
      \frac{
        P(L_1=l)P\okra{\sum_{i=2}^n L_i \ge M-l}
      }{
        P\okra{{S^+_n}^c}
      }
    }
    \\
    &\le    
    \sum_{l=0}^\infty
    i\okra{
      P(L_1=l)P\okra{\sum_{i=2}^n L_i \ge M-l}
    }
    \\
    &\le    
    \sum_{l=0}^{N-1}
    i\okra{
      P\okra{\sum_{i=2}^n L_i \ge M-l}
    }
    +
    \sum_{l=N}^\infty
    i\okra{
      P(L_1=l)
    }
    \\
    &\le
    N i\okra{
      P\okra{\sum_{i=2}^n L_i \ge M-N+1}
    }
    +
    \sum_{l=N}^\infty
    i\okra{
      P(L_1=l)
    }
    \\
    &\le
    \frac{n(n-1)t\epsilon^2}{2^{(n-1)t\epsilon/2}} 
    +
    \frac{n A(\gamma)\epsilon}{2^{(n-1)\gamma \epsilon/2}}
    \le
    \frac{1}{2^{ns\epsilon}}
    .
  \end{align*}
   Analogously we obtain the claim for ${T^+_n}^c$.
\end{proof}

Now, define events
\begin{align}
  \bar S^+_n
  &:=    
  \okra{\sum_{i=1}^{n} \bar L_i < n(L^+_t+\epsilon)}
  ,
  \\
  \bar S^-_n
  &:=    
  \okra{\sum_{i=1}^{n} \bar L_i > n(L^-_t-\epsilon)}
  ,
  \\
  \bar T^+_n
  &:=
  \okra{\sum_{i=-n+1}^{-1} \bar L_i < (n-1)(L^+_t+\epsilon)}
  ,
  \\
  \bar T^-_n
  &:=
  \okra{\sum_{i=-n+1}^{-1} \bar L_i > (n-1)(L^-_t-\epsilon)}
  ,
  \\
  \bar B
  &:=\okra{\bar L_0\le l}
  .
\end{align}
\begin{lemma}
  For $m\ge n(L^+_t+\epsilon)+l$ we have
  \begin{align}
    I\okra{\bar X_{-n+1:-1};\bar X_{1:n}\middle|
      \bar S^+_n\cap \bar T^+_n}
    &=
    I\okra{\bar X_{-n+1:-1};\bar X_{1:n}\middle|
      \bar B\cap \bar S^+_n\cap \bar T^+_n}
    \nonumber
    \\
    \label{BarXCBarYC}
    &\le
    I\okra{\bar Y_{-m+1:0};\bar Y_{1:m}\middle|
      \bar B\cap \bar S^+_n\cap \bar T^+_n}
    ,
  \end{align}
  whereas for $m\le (n-1)(L^-_t-\epsilon)$ we have
  \begin{align}
    \label{BarYCBarXC}
    I\okra{\bar Y_{-m+1:0};\bar Y_{1:m}\middle|
      \bar S^-_n\cap \bar T^-_n}
    &\le
    I\okra{\bar X_{-n+1:0};\bar X_{0:n}\middle|
      \bar S^-_n\cap \bar T^-_n}
    .
  \end{align}
\end{lemma}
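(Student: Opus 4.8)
The plan is to deduce both (\ref{BarXCBarYC}) and (\ref{BarYCBarXC}) from the conditional data processing inequality, using the synchronization property of the code (\ref{ConjCode}): the letter $2$ occurs in the concatenation $f^{\mathbb{Z}}((\bar x_i)_{i\in\mathbb{Z}})$ precisely at the right ends of the codewords $f(\bar x_i)$. Hence, if a window $\bar y_{a:b}$ of $(\bar Y_i)_{i\in\mathbb{Z}}=T^{-N}f^{\mathbb{Z}}((\bar X_i)_{i\in\mathbb{Z}})$ \emph{entirely contains} the codewords $f(\bar x_j),\dots,f(\bar x_{j'})$ of a run of source letters, then $\bar x_{j:j'}$ is a deterministic function of $\bar y_{a:b}$: one marks the letters $2$, reads off the codewords delimited by consecutive marks, and discards the at most two truncated codewords at the ends. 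I shall use that $\bar Y_j$ reads position $j-N$ of the concatenation, in which $f(\bar X_0)$ occupies positions $1-\bar L_0,\dots,0$, $f(\bar X_i)$ ($i\ge1$) occupies positions $1+\sum_{r=1}^{i-1}\bar L_r,\dots,\sum_{r=1}^{i}\bar L_r$, and $f(\bar X_{-i})$ ($i\ge1$) occupies positions $1-\bar L_0-\sum_{r=1}^{i}\bar L_{-r},\dots,-\bar L_0-\sum_{r=1}^{i-1}\bar L_{-r}$; thus $\bar Y_{1:m}$ reads positions $1-N,\dots,m-N$ and $\bar Y_{-m+1:0}$ reads positions $-m+1-N,\dots,-N$. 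The step I expect to be the main obstacle is the bookkeeping around the boundary codeword $f(\bar X_0)$ under the random shift $N\in\klam{0,\dots,\bar L_0-1}$, which straddles position $0$: in (\ref{BarXCBarYC}) the alignment of the parsing of the left window must be pinned down, which is done by the letter $\bar Y_0$ (it equals $2$ exactly when $N=0$) together with the length bound from $\bar B$, and in (\ref{BarYCBarXC}) the delicate point is that both $\bar Y$-blocks depend on the \emph{same} shift $N$.

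For the equality in (\ref{BarXCBarYC}): by (\ref{DefBarX}) and the product structure of the Santa Fe process (cf.\ the proof of Lemma \ref{theoBarXX}), $\bar K_0$ is independent of the family $((\bar K_i)_{i\neq0},(\bar Z_{ik})_{i\in\mathbb{Z},\,k\in\mathbb{N}})$. Since $\bar B=\okra{\bar L_0\le l}$ is measurable with respect to $\bar K_0$, whereas $\bar X_{-n+1:-1}$, $\bar X_{1:n}$, $\bar S^+_n$ and $\bar T^+_n$ are measurable with respect to the remaining variables, the event $\bar B$ is independent of $(\bar X_{-n+1:-1},\bar X_{1:n})$ jointly with $\bar S^+_n\cap\bar T^+_n$. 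Hence adjoining $\bar B$ to the conditioning does not change the conditional law of $(\bar X_{-n+1:-1},\bar X_{1:n})$, and the two conditional mutual informations in (\ref{BarXCBarYC}) agree.

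For the inequality in (\ref{BarXCBarYC}), work under $P(\cdot\mid\bar B\cap\bar S^+_n\cap\bar T^+_n)$ with $m\ge n(L^+_t+\epsilon)+l$. On $\bar S^+_n\cap\bar B$ one has $\sum_{r=1}^{n}\bar L_r+N<n(L^+_t+\epsilon)+(l-1)<m$, so the codewords $f(\bar X_1),\dots,f(\bar X_n)$ (positions $1,\dots,\sum_{r=1}^{n}\bar L_r$) all fall among the positions $1-N,\dots,m-N$ read by $\bar Y_{1:m}$; on $\bar T^+_n\cap\bar B$ one has $\bar L_0+\sum_{r=1}^{n-1}\bar L_{-r}<l+(n-1)(L^+_t+\epsilon)\le m\le m+N$, so $f(\bar X_{-1}),\dots,f(\bar X_{-n+1})$ all fall among the positions $-m+1-N,\dots,-N$ read by $\bar Y_{-m+1:0}$. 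By the synchronization principle, with $\bar Y_0$ fixing the alignment on the left window, $\bar X_{1:n}$ and $\bar X_{-n+1:-1}$ become deterministic functions of $\bar Y_{1:m}$ and $\bar Y_{-m+1:0}$ respectively under this conditional measure, and the conditional data processing inequality gives (\ref{BarXCBarYC}).

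For (\ref{BarYCBarXC}) the roles are reversed. Under $P(\cdot\mid\bar S^-_n\cap\bar T^-_n)$ with $m\le(n-1)(L^-_t-\epsilon)$, the window $\bar Y_{1:m}$ reads only positions $1-N,\dots,m-N$, and these all lie in the block $f(\bar X_0)f(\bar X_1)\cdots f(\bar X_n)$ because $1-N\ge 1-\bar L_0$ and, on $\bar S^-_n$, $m-N\le m\le(n-1)(L^-_t-\epsilon)\le n(L^-_t-\epsilon)<\sum_{r=1}^{n}\bar L_r$; hence $\bar Y_{1:m}$ is a function of $(\bar X_{0:n},N)$, and symmetrically $\bar Y_{-m+1:0}$ is a function of $(\bar X_{-n+1:0},N)$ by the analogous estimate on $\bar T^-_n$. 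The conditional data processing inequality applied along the Markov chain $\bar Y_{-m+1:0}-\bar X_{-n+1:0}-\bar X_{0:n}-\bar Y_{1:m}$ then yields (\ref{BarYCBarXC}); what makes this chain legitimate is that $N$ is conditionally independent of $(\bar X_i)_{i\in\mathbb{Z}}$ given $\bar X_0$, and $\bar X_0$ is the common element of the overlapping source blocks $\bar X_{-n+1:0}$ and $\bar X_{0:n}$. Making this last step airtight — that the shared shift $N$ carries no information across the cut beyond what already passes through $\bar X_0$ — is the delicate part of the argument.
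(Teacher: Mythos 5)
Your overall route is the paper's own: the paper proves this lemma by a one-line appeal to relation (\ref{DefBarY}) and the conditional data processing inequality with $U'=g(U)$, $V'=h(V)$ on the conditioning event, and you have correctly identified that route, correctly justified the equality in (\ref{BarXCBarYC}) via independence of $\bar B$ from the remaining variables, done the length bookkeeping right, and correctly decoded $\bar X_{-n+1:-1}$ from the left window (counting complete codewords from its right end, with $\bar Y_0=2$ iff $N=0$ resolving the boundary with $f(\bar X_0)$). However, your verification of the DPI hypotheses has a step that fails: the claim that $\bar X_{1:n}$ is a deterministic function of $\bar Y_{1:m}$ \emph{alone} on $\bar B\cap\bar S^+_n\cap\bar T^+_n$ is false. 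For $N\ge 2$ the right window begins with the length-$N$ suffix of $f(\bar X_0)$, which has the form $w2$ with $w\in\klam{0,1}^+$ and is therefore itself a legal codeword of (\ref{ConjCode}); nothing inside $\bar Y_{1:m}$ tells the parser whether to discard this first $2$-terminated block or to read it as $f(\bar X_1)$, and both alternatives have positive probability on the conditioning event (a window beginning $112\ldots$ is consistent with $N=0$, $\bar X_1=(3,1)$, and with $N=3$, $f(\bar X_0)=0112$). The symbol $\bar Y_0$, which you invoke to fix the alignment, belongs to the \emph{left} block and is not available to a function $h$ acting on $\bar Y_{1:m}$, so the hypothesis $\bar X_{1:n}=h(\bar Y_{1:m})$ on the event is not satisfied; adjoining $\boole{N=0}$ (or $\bar Y_0$) to the right block repairs the argument only up to a bounded additive term, not to the inequality exactly as stated.

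The second half has the gap you yourself flag, and it is a real one: the chain $\bar Y_{-m+1:0}-\bar X_{-n+1:0}-\bar X_{0:n}-\bar Y_{1:m}$ is not a Markov chain. Conditionally on all the $\bar X$'s, both $\bar Y$ blocks are nonconstant functions of the \emph{same} shift $N$ (the left block exhibits the prefix of $f(\bar X_0)$ it contains, hence essentially $N$, and the right block's parsing offset also depends on $N$), so $I\okra{\bar Y_{-m+1:0};\bar Y_{1:m}\middle|\bar X_{-n+1:0},\bar X_{0:n}}>0$ in general and the conditional independence across the cut fails. The fact that $N$ is conditionally independent of $(\bar X_i)_{i\in\mathbb{Z}}$ given $\bar X_0$ does not deliver what is needed, since the cut lies between the two $\bar Y$ blocks, not between $N$ and the $\bar X$'s; applying the DPI with $N$ adjoined to both arguments only gives $I\okra{\bar X_{-n+1:0},N;\bar X_{0:n},N\middle|C}=I\okra{\bar X_{-n+1:0};\bar X_{0:n}\middle|C}+H(N|\bar X_0,\ldots)$, i.e., an extra non-negligible term. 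So, while you found the intended argument (the paper's proof is the same terse DPI citation and does not spell out these boundary issues either), your write-up as it stands leaves both DPI hypotheses unestablished at exactly the two delicate points; note that for the downstream use in Proposition \ref{theoEnMixingCoded} a version of the lemma with $O(1)$ slack would suffice, and that is what your argument delivers after the repairs indicated above.
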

\begin{proof}
  The claims follow by equality (\ref{DefBarY}) and conditional data
  processing inequality
  \begin{align*}
    I(U';V'|C)\le I(U;V|C)
    ,
  \end{align*}
  which holds if equalities $U'=g(U)$ and $V'=h(V)$ are satisfied on
  $C$.
\end{proof}

There is an additional fact that we shall use.  
Let $I_C$ be the indicator function of event $C$. Observe that
\begin{align}
  P(C)I(X;Y|C) 
  &\le P(C)I(X;Y|C)+P(C^c)I(X;Y|C^c)
  \nonumber
  \\
  &=I(X;Y|I_C)= I(X;Y)-I(X;Y;I_C)
  \label{PCIC}
  ,
\end{align}
where $\abs{I(X;Y;I_C)}\le H(I_C)\le 1$ by the information diagram
\cite{Yeung02}. 

\begin{proof*}{Proposition \ref{theoEnMixingCoded}}
  Observe that
  \begin{align}
    H\okra{X_1\middle|{S^+_n}^c}&
    = H\okra{X_1|L_1} + 
    H\okra{L_1\middle|{S^+_n}^c}
    \nonumber\\
    \label{HXSPlus}
    &\le H\okra{X_0} + 
    H\okra{L_1\middle|{S^+_n}^c}
    ,
    \\
    H\okra{X_{-1}\middle|{T^+_n}^c}&
    = H\okra{X_{-1}|L_{-1}} + 
    H\okra{L_{-1}\middle|{T^+_n}^c}
    \nonumber\\
    \label{HXTPlus}
    &\le H\okra{X_0} + 
    H\okra{L_{-1}\middle|{T^+_n}^c}
    .
  \end{align}
  because $X_1$ is conditionally independent from ${S^+_n}^c$ given
  $L_1$ and $X_{-1}$ is conditionally independent from ${T^+_n}^c$
  given $L_{-1}$.  Now, assume that $n$ is sufficiently large so that
  bounds (\ref{PSPlusHSPlus}) and (\ref{PTPlusHTPlus}) hold true.  For
  brevity, define events
  \begin{align*}
    C^+_n
    &:=
    T^+_n \cap S^+_n
    ,
    \\
    \bar C^+_n
    &:=
    \bar T^+_n \cap \bar S^+_n
    .
  \end{align*}
  Then inequalities (\ref{HXSPlus}), (\ref{HXTPlus}), (\ref{PSPlus}),
  (\ref{PTPlus}), (\ref{PSPlusHSPlus}), and (\ref{PTPlusHTPlus}) yield
  \begin{align}
    P\okra{{C^+_n}^c}&I\okra{X_{-n+1:-1};X_{1:n}\middle|{C^+_n}^c}
    \nonumber\\
    &\le P\okra{{S^+_n}^c}nH\okra{X_1\middle|{S^+_n}^c}
    + P\okra{{T^+_n}^c}(n-1)H\okra{X_{-1}\middle|{T^+_n}^c}
    \nonumber\\
    \label{PCIXC}    
    &\le \frac{2n}{2^{(n-1)t\epsilon}}H\okra{X_0}
    + \frac{2n}{2^{(n-1)s\epsilon}}
    .
  \end{align}
  Moreover, assume that $m\ge n(L^+_t+\epsilon)+l$. Then applying
  subsequently (\ref{PCIC}), (\ref{BarXCBarYC}),
  (\ref{BarXX}), (\ref{PCIC}), and (\ref{PCIXC}) we obtain
  \begin{align}
    E_{\bar\nu}(m)
    &=
    I\okra{\bar Y_{-m+1:0};\bar Y_{1:m}}
    \nonumber\\
    &\ge 
    P\okra{\bar B\cap \bar C^+_n}
    I\okra{\bar Y_{-m+1:0};\bar Y_{1:m}\middle|\bar B\cap \bar C^+_n}
    - 1
    \nonumber\\
    &\ge 
    P\okra{\bar B}P\okra{\bar C^+_n}
    I\okra{\bar X_{-n+1:-1};\bar X_{1:n}\middle|\bar C^+_n}
    - 1
    \nonumber\\
    &= 
    P\okra{\bar B}P\okra{C^+_n}
    I\okra{X_{-n+1:-1};X_{1:n}\middle|C^+_n}
    - 1
    \nonumber\\
    &\ge 
    P\okra{\bar B}P\okra{C^+_n}
    \kwad{
     I\okra{X_{-n+1:0};X_{1:n}\middle|C^+_n}-H(X_0|C^+_n)
    }- 1
    \nonumber\\
    &\ge 
    P\okra{\bar B}P\okra{C^+_n}
    I\okra{X_{-n+1:0};X_{1:n}\middle|C^+_n}
    -H(X_0) - 1
    \nonumber\\
    &
    \ge
    P\okra{\bar B}
    \kwad{
      E_\mu(n)-1
      -P\okra{{C^+_n}^c}I\okra{X_{-n+1:0};X_{1:n}\middle|{C^+_n}^c}
    }-H(X_0) - 1    
    \nonumber\\
    &
    \ge
    P\okra{\bar B}
    E_\mu(n)
    -P\okra{{C^+_n}^c}I\okra{X_{-n+1:-1};X_{1:n}\middle|{C^+_n}^c}
    -2H(X_0) - 2    
    \nonumber\\
    \label{EnuEmu}
    &
    \ge
    P\okra{\bar B}E_\mu(n)
    -\kwad{\frac{2n}{2^{(n-1)t\epsilon}}+2}H\okra{X_0}
    -\frac{2n}{2^{(n-1)s\epsilon}} - 2
    .
  \end{align}

  Next, define events
  \begin{align*}
    C^-_n
    &:=
    T^-_n \cap S^-_n
    ,
    \\
    \bar C^-_n
    &:=
    \bar T^-_n \cap \bar S^-_n
    .    
  \end{align*}
  By (\ref{PSMinus}) and (\ref{PTMinus}) we have
  \begin{align}
    P\okra{{C^-_n}^c} &I\okra{\bar Y_{-m+1:0};\bar
      Y_{1:m}\middle|{C^-_n}^c}
    \nonumber\\
    \label{PCIYC}
    &\le \okra{P\okra{{S^-_n}^c}+P\okra{{T^-_n}^c}} m\log 3
    \le \frac{2m}{2^{(n-1)t\epsilon}}\log 3
    .
  \end{align}
  Assume that $m\le (n-1)(L^-_t-\epsilon)$. Then applying subsequently
  (\ref{BarXX}), (\ref{PCIC}), (\ref{BarYCBarXC}), (\ref{PCIC}), and
  (\ref{PCIYC}) we obtain
  \begin{align}
    E_{\mu}(n)
    &=
    I\okra{X_{-n+1:0};X_{1:n}}
    \nonumber\\
    &\ge
    I\okra{X_{-n+1:-1};X_{1:n}} 
    \nonumber\\
    &=
    I\okra{\bar X_{-n+1:-1};\bar X_{1:n}} 
    \nonumber\\
    &\ge
    I\okra{\bar X_{-n+1:0};\bar X_{0:n}} - 2H(\bar X_0) 
    \nonumber\\
    &\ge
    P\okra{\bar C^-_n}
    I\okra{\bar X_{-n+1:0};\bar X_{0:n}\middle|\bar C^-_n} 
    -1 - 2H(\bar X_0) 
    \nonumber\\
    &\ge
    P\okra{\bar C^-_n}
    I\okra{\bar Y_{-m+1:0};\bar Y_{1:m}\middle|\bar C^-_n} 
    - 2H(\bar X_0) -1
    \nonumber\\
    &
    \ge
    E_{\bar\nu}(m)-1
    -P\okra{{C^-_n}^c}I\okra{\bar Y_{-m+1:0};\bar Y_{1:m}\middle|{C^-_n}^c}
    -2H(\bar X_0) - 1
    \nonumber\\
    \label{EmuEnu}
    &
    \ge
    E_{\bar\nu}(m)
    -\frac{2m}{2^{(n-1)t\epsilon}}\log 3
    -2H(\bar X_0) - 2
    .
  \end{align}

  From bounds (\ref{EnuEmu}) and (\ref{EmuEnu}) we obtain
  \begin{align*}
    \frac{1}{[L^-_t-\epsilon]^\beta}
    \limsup_{n\rightarrow\infty} \frac{E_{\bar\nu}(n)}{n^\beta}
    &\ge
    \limsup_{m\rightarrow\infty} \frac{E_{\bar\nu}(m)}{m^\beta}
    \ge
    \frac{P(\bar L_0\le l)}{[L^+_t+\epsilon]^\beta}
    \limsup_{n\rightarrow\infty} \frac{E_{\bar\nu}(n)}{n^\beta}
    ,
    \\
    \frac{1}{[L^-_t-\epsilon]^\beta}
    \liminf_{n\rightarrow\infty} \frac{E_{\bar\nu}(n)}{n^\beta}
    &\ge
    \liminf_{m\rightarrow\infty} \frac{E_{\bar\nu}(m)}{m^\beta}
    \ge
    \frac{P(\bar L_0\le l)}{[L^+_t+\epsilon]^\beta}
    \liminf_{n\rightarrow\infty} \frac{E_{\bar\nu}(n)}{n^\beta}
    .
  \end{align*}
  If we consider $t\rightarrow 0$, $\epsilon\rightarrow 0$, and
  $l\rightarrow \infty$ then the requested claims will follow by equation
  (\ref{limL}) and Proposition \ref{theoEnMixing}.
\end{proof*}

\appendix
\renewcommand{\thesection}{\Alph{section}}

\section{Mixing properties}
\label{secMixing}

In this appendix we will discuss mixing properties of the generalized
Santa Fe process.  The setting makes use of
the $L^2$ space of complex valued functions.
  Then, for a~measure space $(\Omega,\mathcal{J},\mu)$ let
  \begin{align*}
    L_0^2(\Omega,\mathcal{J},\mu):=\klam{f\in
    L^2(\Omega,\mathcal{J},\mu): \int f d\mu=0}
  \end{align*}
  and denote the inner product $(f,g)_\mu:=\int f \bar g d\mu$ and
  the norm $\norm{f}_\mu:=\sqrt{(f,f)_\mu}$ for $f,g\in
  L^2(\Omega,\mathcal{J},\mu)$. Let also $T:\Omega\rightarrow\Omega$ be an
  invertible transformation that preserves the measure, $\mu\circ
  T^{-1}=\mu$.  The dynamical system $(\Omega,\mathcal{J},\mu,T)$ is called
    \emph{mixing} when
      $\displaystyle\lim_{n\rightarrow\infty} (f\circ T^n, g)_\mu=0
      \text{ for $f,g\in L_0^2(\Omega,\mathcal{J},\mu)$.}$
By the way, we know that any mixing dynamical
system 
is ergodic \cite[Chapter 1.\S6]{CornfeldFominSinai82en}.

The following proposition generalizes Theorem 2 from \cite[Chapter
10.\S 1]{CornfeldFominSinai82en}. Whereas the original claim deals
with finite direct products of dynamical systems, we will extend it
here to infinite products. To the best of our knowledge this
generalization has not been discussed in the literature so far.  The
proof is similar to the finite case, except for using a~different
orthonormal basis of the product space.
\begin{proposition}
  \label{theoStrongMixing}
  Let $(\Omega_j,\mathcal{J}_j,\mu_j,T_j)$, where $j\in\mathbb{N}$, be
  dynamical systems with probability measures $\mu_j(\Omega)=1$.
  Consider the direct product $(\Omega,\mathcal{J},\mu,T)$, where
  $\Omega=\times_{j=1}^\infty\Omega_j$,
  $\mathcal{J}=\otimes_{j=1}^\infty\mathcal{J}_j$,
  $\mu=\times_{j=1}^\infty\mu_j$, and
  $T(\omega)=(T_j(\omega_j))_{j\in\mathbb{N}}$ for
  $\omega=(\omega_j)_{j\in\mathbb{N}}$, $\omega_j\in\Omega_j$.  If
  $(\Omega_j,\mathcal{J}_j,\mu_j,T_j)$ are mixing then
  $(\Omega,\mathcal{J},\mu,T)$ is also mixing.
\end{proposition}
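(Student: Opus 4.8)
The plan is to follow the finite-product argument of \cite[Chapter 10.\S 1]{CornfeldFominSinai82en}, reducing the mixing condition to a total subset of $L_0^2(\Omega,\mathcal{J},\mu)$ and exploiting that both $T$ and $\mu$ factor over the coordinates. First I would record the routine reduction. Each $T_j$, hence $T$, preserves the measure, so $\norm{f\circ T^n}_\mu=\norm{f}_\mu$; by the Cauchy--Schwarz inequality, if $\norm{f-f'}_\mu<\varepsilon$ and $\norm{g-g'}_\mu<\varepsilon$ then
\[
  \abs{(f\circ T^n,g)_\mu-(f'\circ T^n,g')_\mu}\le\varepsilon\norm{g}_\mu+\varepsilon\norm{f'}_\mu
\]
uniformly in $n$. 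Hence, to prove $\lim_{n\to\infty}(f\circ T^n,g)_\mu=0$ for all $f,g\in L_0^2(\Omega,\mathcal{J},\mu)$, it suffices to prove it for $f,g$ ranging over a family whose linear span is dense in $L_0^2(\Omega,\mathcal{J},\mu)$.

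For that family I would use tensor products adapted to the product structure. For each $j$ fix an orthonormal basis $\klam{e^{(j)}_k}_{k\in I_j}$ of $L^2(\Omega_j,\mathcal{J}_j,\mu_j)$ with $0\in I_j$ and $e^{(j)}_0\equiv 1$, and for a multi-index $\mathbf{k}=(k_j)_j$ ($k_j\in I_j$, $k_j=0$ for all but finitely many $j$) set $e_{\mathbf{k}}:=\prod_j e^{(j)}_{k_j}$, a finite product. The claim is that $\klam{e_{\mathbf{k}}}_{\mathbf{k}}$ is an orthonormal basis of $L^2(\Omega,\mathcal{J},\mu)$, so that $\klam{e_{\mathbf{k}}:\mathbf{k}\neq\mathbf{0}}$ spans $\klam{1}^\perp=L_0^2(\Omega,\mathcal{J},\mu)$. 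This is the ``different orthonormal basis'' alluded to in the statement, and establishing it is the one step that needs care. Orthonormality is immediate from Fubini. For totality: the indicator of a measurable rectangle $\prod_{j\in F}A_j$ with $F$ finite lies in the closed span of the $e_{\mathbf{k}}$, since in each coordinate $\mathbf{1}_{A_j}$ is an $L^2(\Omega_j,\mathcal{J}_j,\mu_j)$-limit of finite linear combinations of the $e^{(j)}_k$, while by Fubini the $L^2(\Omega,\mathcal{J},\mu)$-norm of a product of functions of distinct coordinates factors, so a telescoping estimate propagates the approximation to the product; the algebra generated by such rectangles generates $\mathcal{J}=\otimes_j\mathcal{J}_j$, so its indicators --- hence all simple functions, hence all of $L^2(\Omega,\mathcal{J},\mu)$ --- lie in the closed span by the usual approximation-in-measure argument.

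Finally I would verify the mixing condition on basis vectors and assemble. Since $T$ acts coordinatewise and $\mu$ is a product,
\[
  (e_{\mathbf{k}}\circ T^n,e_{\mathbf{l}})_\mu=\prod_j\bigl(e^{(j)}_{k_j}\circ T_j^n,\,e^{(j)}_{l_j}\bigr)_{\mu_j},
\]
where all but finitely many factors equal $(1,1)_{\mu_j}=1$ (the $j$ with $k_j=l_j=0$). Fix $\mathbf{k}\neq\mathbf{0}$ and choose $j_0$ with $k_{j_0}\neq 0$. If $l_{j_0}=0$, the $j_0$-factor is $\int e^{(j_0)}_{k_{j_0}}\circ T_{j_0}^n\,d\mu_{j_0}=\int e^{(j_0)}_{k_{j_0}}\,d\mu_{j_0}=0$ for every $n$, by measure preservation; if $l_{j_0}\neq 0$, then $e^{(j_0)}_{k_{j_0}}$ and $e^{(j_0)}_{l_{j_0}}$ lie in $L_0^2(\Omega_{j_0},\mathcal{J}_{j_0},\mu_{j_0})$ and the $j_0$-factor tends to $0$ by mixing of $(\Omega_{j_0},\mathcal{J}_{j_0},\mu_{j_0},T_{j_0})$. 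Every factor has modulus at most $1$ (Cauchy--Schwarz) and only finitely many are nontrivial, so $(e_{\mathbf{k}}\circ T^n,e_{\mathbf{l}})_\mu\to 0$ for all $\mathbf{k}\neq\mathbf{0}$ and all $\mathbf{l}$; by linearity the same holds with $e_{\mathbf{k}}$ replaced by any finite linear combination of basis vectors indexed by $\mathbf{k}\neq\mathbf{0}$. Feeding this into the $3\varepsilon$ reduction of the first paragraph yields $\lim_{n\to\infty}(f\circ T^n,g)_\mu=0$ for all $f,g\in L_0^2(\Omega,\mathcal{J},\mu)$, i.e.\ $(\Omega,\mathcal{J},\mu,T)$ is mixing. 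The main obstacle is the tensor-product basis claim; with it in hand, the remainder is the classical finite-product computation.
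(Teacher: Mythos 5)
Your proof is correct and follows essentially the same route as the paper: verify mixing on a tensor-product orthonormal system built from coordinate bases with $e^{(j)}_0\equiv 1$, using that all but one factor is bounded by $1$ while one factor vanishes by mixing (or by $\int e^{(j_0)}_{k_{j_0}}\,d\mu_{j_0}=0$), and then pass to general $f,g\in L_0^2$ by a density/$3\varepsilon$ argument. The only difference is bookkeeping: you establish totality of the product system directly from indicators of finite-dimensional rectangles, whereas the paper cites the analogous basis result (Berezanski\u{\i}) and $L^2$-bounded martingale convergence for completeness.
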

\begin{proof}
  Let $(e_{\alpha_j,j})_{\alpha_j\in A_j}$ be orthonormal bases of
  spaces $L^2(\Omega_j,\mathcal{J}_j,\mu_j)$ with $e_{0j}=1$ and
  $e_{\alpha_j,j}\in L_0^2(\Omega_j,\mathcal{J}_j,\mu_j)$. Then the
  set
  \begin{align}
    \label{Basis}
    \klam{e_\emptyset(\omega)=1}
    \cup
    \klam{e_\alpha(\omega)=\prod_{j=1}^{k}
    e_{\alpha_j,j}(\omega_j)}_{\alpha\in A_1\times A_2\times...\times
    (A_k\setminus\klam{0}),k=1,2,...}
  \end{align}
  with multi-indices $\alpha=(\alpha_1,\alpha_2,...,\alpha_k)$ is an
  orthonormal basis of the space $L^2(\Omega,\mathcal{J},\mu)$,
  cf., \cite[page 29]{Berezanskij86en}. (Orthogonality of set
  (\ref{Basis}) is obvious whereas its completeness follows from the
  completeness of the analogical orthonormal sets for finite products
  and the $L^2$-bounded martingale convergence.)  Let
  $\alpha,\alpha'\neq\emptyset$.  We have $e_\alpha,e_{\alpha'}\in
  L_0^2(\Omega,\mathcal{J},\mu)$ and
  \begin{align}
    \abs{(e_\alpha\circ T^n, e_\alpha')_\mu}
    &=
    \prod_{j=1}^{k} 
    \abs{(e_{\alpha_j,j}\circ T_j^n, e_{\alpha'_j,j})_{\mu_j}}
    \le     
    \abs{(e_{\alpha_k,k}\circ T_k^n, e_{\alpha'_k,k})_{\mu_k}}
    \label{ProductIneq}
  \end{align}
  by Schwarz inequality if $\alpha$ and $\alpha$ have the same length
  $k$.  Otherwise, $(e_\alpha\circ T^n, e_\alpha')_\mu=0$.  Hence
  $\lim_{n\rightarrow\infty} (e_\alpha\circ T^n, e_\alpha')_\mu=0$
  holds by the hypothesis.

  Any other functions $f,g\in L_0^2(\Omega,\mathcal{J},\mu)$ can be
  represented as series $f=\sum_{\alpha\neq\emptyset} f_\alpha
  e_\alpha$ and $g=\sum_{\alpha\neq\emptyset} g_\alpha e_\alpha$,
  where $\sum_{\alpha\neq\emptyset}
  \abs{f_\alpha}^2,\sum_{\alpha\neq\emptyset}
  \abs{g_\alpha}^2<\infty$.  Assume without loss of generality that
  $\norm{f}_\mu=\norm{g}_\mu=1$.  We will show that for every
  $\epsilon>0$, inequality $\abs{(f\circ T^n, g)_\mu}< \epsilon$ holds
  for sufficiently large $n$.  Let $F$ and $G$ be finite subsets of
  multi-indices such that $\norm{f-f'}_\mu,\norm{g-g'}_\mu<\epsilon/4$
  for certain $f'=\sum_{\alpha\in F} f'_\alpha e_\alpha$ and
  $g'=\sum_{\alpha\in G} g'_\alpha e_\alpha$ where $f',g'\in
  L_0^2(\Omega,\mathcal{J},\mu)$ and $\norm{f'}_\mu=\norm{g'}_\mu=1$.
  For sufficiently large $n$, we have $\abs{(f'\circ
    T^n,g')_\mu)}<\epsilon/4$. Then
  \begin{align*}
    \abs{(f\circ T^n,g)_\mu)}
    &\le
    \abs{(f'\circ T^n,g')_\mu)}
    +
    \abs{((f-f')\circ T^n,g')_\mu)}
    \\
    &\qquad
    +
    \abs{(f'\circ T^n,(g-g'))_\mu)}
    +
    \abs{((f-f')\circ T^n,(g-g')_\mu)}
    \\
    &<
    \epsilon/4
    +
    \norm{f-f'}_\mu+\norm{g-g'}_\mu
    +
    \norm{f-f'}_\mu\norm{g-g'}_\mu
    <
    \epsilon
    ,
  \end{align*}
  which completes the proof.
\end{proof}

Now let us apply this result to the generalized Santa Fe process.
A~stochastic process $(X_{i})_{i\in\mathbb{Z}}$ on
$(\Omega,\mathcal{J},P)$, where $X_i:(\Omega,\mathcal{J})\rightarrow
(\mathbb{X},\mathcal{X})$, is called 
mixing if $(\mathbb{X}^{\mathbb{Z}},\mathcal{X}^{\mathbb{Z}},\mu,T)$
is 
mixing for $\mu=P((X_k)_{k\in\mathbb{Z}}\in\cdot)$ and
$T((x_{i})_{i\in\mathbb{Z}})=(x_{i+1})_{i\in\mathbb{Z}}$.  

\begin{proof*}{Proposition \ref{theoMixing}}
  Introduce an auxiliary process $(W_i)_{i\in\mathbb{Z}}$, where
  $W_i=(K_i,(Z_{ik})_{k\in\mathbb{N}})$.  Process
  $(W_i)_{i\in\mathbb{Z}}$ is a direct product of processes
  $(K_{i})_{i\in\mathbb{Z}}$, $(Z_{i1})_{i\in\mathbb{Z}}$,
  $(Z_{i2})_{i\in\mathbb{Z}}$, ..., which are all mixing for
  $p_k\in(0,1)$. Hence $(W_i)_{i\in\mathbb{Z}}$ is mixing by
  Proposition \ref{theoStrongMixing}. (In our application, we take
  $\mu=P((W_i)_{i\in\mathbb{Z}}\in\cdot)$,
  $\mu_1=P((K_{i})_{i\in\mathbb{Z}}\in\cdot)$, and
  $\mu_{k+1}=P((Z_{ik})_{i\in\mathbb{Z}}\in\cdot)$ for $k\ge 1$. The
  transformations are
  $T((w_i)_{i\in\mathbb{Z}})=(w_{i+1})_{i\in\mathbb{Z}}$,
  $T_1((k_i)_{i\in\mathbb{Z}})=(k_{i+1})_{i\in\mathbb{Z}}$, and
  $T_{k+1}((z_i)_{i\in\mathbb{Z}})=(z_{i+1})_{i\in\mathbb{Z}}$ for
  $k\ge 1$.)  Having established the mixing property for
  $(W_i)_{i\in\mathbb{Z}}$, we notice that $X_i=f(W_i)$ for
  a~measurable function $f$. Hence $(X_{i})_{i\in\mathbb{Z}}$ is
  mixing by Theorem 3 from \cite[Chapter 10.\S
  1]{CornfeldFominSinai82en}.
\end{proof*}

\section*{Acknowledgment}

First, I~would like to thank Richard Bradley for letting me know about
his paper and Nicholas Travers for suggesting a~few other
references. Second, special thanks are due to Peter Gr\"unwald for
inviting me to Centrum Wiskunde \& Informatica, where the proof of
Proposition \ref{theoEnUDP} was drafted. Third, I~appreciate comments
of Jan Mielniczuk and the referees, regarding the paper composition.

\bibliographystyle{IEEEtran}


\end{document}